\newtheorem{theorem}{Theorem}
\newtheorem{proposition}{Proposition}
\newtheorem{remark}{Remark}
\begin{document}

\title{Relay Selection in Wireless Powered Cooperative Networks with Energy Storage}

\author{Ioannis Krikidis,~\IEEEmembership{Senior Member,~IEEE}
\thanks{I. Krikidis is with the Department of Electrical and Computer Engineering, Faculty of Engineering, University of Cyprus, Nicosia 1678 (E-mail: {\sf krikidis@ucy.ac.cy}).}
\thanks{This work was supported by the Research Promotion Foundation, Cyprus,
under the Project FUPLEX with Pr. No. CY-IL/0114/02.}}

\maketitle

\begin{abstract}
This paper deals with the problem of relay selection in wireless powered cooperative networks, where spatially random relays are equipped with energy storage devices e.g., batteries. In contrast to conventional techniques and in order to reduce complexity, the relay nodes can either harvest energy from the source signal (in case of uncharged battery) or attempt to decode and forward it (in case of charged battery). Several relay selection schemes that correspond to different state information requirements and implementation complexities are proposed. The charging/discharging behavior of the battery is modeled as a two-state Markov chain and analytical expressions for the steady-state distribution and the outage probability performance are derived for each relay selection scheme. We prove that energy storage significantly affects the performance of the system and results in a zeroth diversity gain at high signal-to-noise ratios; the convergence floors depend on the steady-state distribution of the battery and are derived in closed-form by using appropriate approximations. The proposed relay selection schemes are generalized to a large-scale network with multiple access points (APs), where relays assist the closest AP  and suffer from multi-user interference.   
\end{abstract}

\begin{keywords}
Cooperative networks, relay selection, SWIPT, RF harvesting, stochastic geometry, outage probability.   
\end{keywords}

\section{Introduction}

\IEEEPARstart{T}{he} roll-out of the {\it Internet of Things} will lead to the massive deployment of sensor nodes and a vast amount of information exchange, making it impractical, even impossible, to individually recharge/control these devices on a regular basis. Wireless powered communication (WPC) is a promising energy solution for the future highly dense and heterogeneous networks with major impact on many different applications. It refers to communication networks where particular nodes power their operations by the received electromagnetic radiation.  The fundamental block for the implementation of this technology is the rectifying-antenna (rectenna) which is a diode-based circuit that converts the radio-frequency (RF) signals to direct-current voltage \cite{BOR,KIM}.

WPC appears in the literature in  three basic architectures that refer to different application scenarios; a rigorous survey of WPC is given in \cite{Bi15,Lu15}. The first main architecture is the wireless power transfer (WPT), where a dedicated RF transmitter wirelessly transfers power to devices \cite{XU,zeng15}. In contrast to ambient RF harvesting, WPT can be continuous and fully controlled and therefore is attractive for applications with strict quality-of-service constraints. The second network architecture is the wireless powered communication network (WPCN), where a dedicated RF transmitter broadcasts energy at the downlink and WPT-based devices transmit information at the uplink \cite{JU,CHEN}. The third fundamental architecture is the simultaneous wireless information and power transfer (SWIPT), where the RF transmitter simultaneously conveys data and energy at the downlink devices \cite{ZHA,MORSI}. Due to practical constraints, SWIPT cannot be performed from the same signal without losses and practical implementations split the received signal in two parts, where one part is used for information transfer and another part is used for power transfer. This signal split can be performed in \cite{KRI1} i) time domain i.e., time switching (TS),  ii) power domain i.e., power splitting (PS), or iii) spatial domain i.e., antenna switching (AS). 

A special case of SWIPT with particular interest for the cooperative networks is the SWIPT with energy/information relaying. In this network structure, a batteryless relay node extracts both information and energy from the source signal, and then uses the harvested energy to forward the source signal to a destination. In \cite{NAS1}, the authors study the performance of a three-node Amplify-and-Forward (AF) relay channel, where the relay node employs TS/PS to power the relaying link. This work is extended in \cite{NAS2} for a Decode-and-Forward (DF) relay channel and the throughput performance is analyzed in closed form for both TS/PS techniques. A three-node relay channel with direct link, which combines TS-SWIPT with the dynamic Decode-and-Forward (DDF) protocol, is analyzed in \cite{KOJ}. The work in\cite{KRI2} studies a multiple-input multiple-output (MIMO) relay channel, where the relay node employs AS technique for SWIPT; the optimal AS is formulated as a binary knapsack problem and low-complexity solutions are proposed. Recent studies re-examine  the fundamental three-node SWIPT relay channel by assuming that the relay node has full-duplex (FD) radio capabilities \cite{ZHO,ZEN}. In \cite{ZHO}, the authors study the optimal beamforming for a FD MIMO relay channel where a relay node uses TS/PS in order to power the relaying link; an appropriate beamforming vector protects the relay input from the loop-channel effects. The integration of the loop-channel interference into the WPT process is discussed in \cite{ZEN}, and the optimal power allocation and beamforming design at the relay node are derived.       

In contrast to the above studies which focus on the basic three-node relay channel, 
more complicated network structures appear in the literature. A network topology where multiple source-destination pairs communicate through a common DF relay node that employs PS-SWIPT is discussed in \cite{DIN1}. In \cite{WAN}, the authors study a two-way relay channel where multiple communication pairs exchange information through a  shared PS-SWIPT MIMO relay node; the optimal beamforming vector that satisfies some well-defined fairness criteria is investigated.  The work in \cite{HE} introduces multiple SWIPT relays and studies the interference relay channel, when multiple sources communicate with their destinations through dedicated SWIPT relays. On the other hand, the authors in \cite{DIN2} study the relay selection problem when an access point (AP) communicates with the destination through multiple PS-SWIPT relay nodes, which are randomly located in the space. This work analyzes the outage probability performance for three main relay selection techniques: i) random relay selection, ii) relay selection based on the Euclidean distance, and iii) distributed (cooperative) beamforming. The consideration of the spatial randomness in WPC studies is of vital importance, since RF harvesting highly depends on the path-loss e.g., \cite{DIN2,HAM,RUI2} use stochastic geometry tools in order to study different WPC networks. 

A SWIPT-based relay  extracts both information and energy from the source transmission. For single-antenna relays, the TS technique requires perfect time synchronization and suffers from interrupted information transmission, since dedicated time slots are used for power transfer. On the other hand, the PS technique dynamically splits the received signal at the relay node and requires appropriate circuits that increase the implementation complexity. The implementation of PS/TS-SWIPT relay channel is an open research problem and several practical issues should be resolved.    

On the other hand, the integration of an energy storage device (e.g., battery, capacitor etc.) at the relay nodes, which is charged by the received RF radiation, introduces another WPC-based relay structure with new potentials and challenges; this network architecture has not been fully explored. The work in \cite{KRI3} introduces a three-node relay channel where the relay node is equipped with a multi-level battery that can be charged by the source signal; the optimal switching policy between WPT and information decoding at the relay node is investigated. This work is extended in \cite{KUA} for a scenario with multiple relay nodes where a single-best relay node is selected for relaying; however, this work does not take into account spatial randomness, which is critical for WPC networks. A WPC network with spatial randomness and batteries is studied in \cite{RUI2}; specifically, the authors study the performance of a large-scale cognitive radio network, where cognitive nodes charge their batteries from the primary transmissions. 

In this paper, we focus on a WPC cooperative network and we study the problem of relay selection, when relay nodes are equipped with batteries and are randomly located in the coverage area of an AP. We assume that the relay nodes hold two-state storage devices (e.g., batteries, capacitors) which are charged by the source transmission; relay nodes become active and can assist source transmission, only when are fully charged. Based on this setup, we re-design the relay selection schemes proposed in \cite{DIN2} (for conventional batteryless relays) and we study their performance. We investigate several relay selection policies that have different channel state information (CSI) requirements and correspond to different complexities. Analytical results for the outage probability performance of the proposed schemes as well as simplified asymptotic expressions for the high signal-to-noise ratio (SNR) regime are derived. We prove that in contrast to the conventional case \cite{DIN2}, where a diversity gain equal to one is ensured, the battery model results in a zeroth diversity gain at high SNRs. The associated outage probability floor becomes lower when the battery status is taken into account with regards to the  relay selection decision. Furthermore, the relay selection schemes are generalized to a large-scale network with randomly located APs, where multi-user interference affects relay decoding. To the best of the authors' knowledge, the problem of relay selection for WPC cooperative networks with batteries at the relay nodes, has not been reported in the literature. 

\noindent {\it Paper organization:} We introduce the system model and the main assumptions in Section \ref{sys_model}. In Section \ref{SEC2}, we present the relay selection policies and we analyze their outage probability performance. Section \ref{SEC3} generalizes the relay selection schemes for a multi-cell network. Simulation results are presented in Section \ref{NR}, followed by our conclusions in Section \ref{SEC5}.

\noindent {\it Notation:} $\mathbb{R}^d$ denotes the $d$-dimensional Euclidean space, $|z|$ denotes the magnitude of a complex variable $z$, $\mathbb{P}(X)$ denotes the probability of the event $X$, $\mathbb{E}(\cdot)$ represents the expectation operator; $\gamma(a,x)$ denotes the lower incomplete gamma function \cite[Eq. 8.350]{GRAD}, and $_2F_1(a,b;c;x)$ is the Gaussian or ordinary hypergeometric function \cite[Eq. 9.100]{GRAD}.

\section{System model}\label{sys_model}

We consider a single-cell where an AP communicates with a destination, $D$, via the help of a set of relay nodes $R_i$, whose locations are determined from a Poisson point process (PPP). All nodes are equipped with a single antenna and the relay nodes  have WPT capabilities. The AP is backlogged and transmits with a fixed power $P$ and a spectral efficiency $r_0$ bits per channel use (BPCU);  the AP's transmitted signal is the only WPT source for the relay nodes. Fig. \ref{model} schematically presents the system model.

\subsection{Topology}
We consider a disc, denoted by $\mathcal{D}$, where the AP is located at the origin of the disc and the radius of the disc is $\rho$.  The relay nodes are located inside the disc and their locations form a homogeneous PPP $\Phi$ (inside $\mathcal{D}$) of density $\lambda$; $N$ denotes the number of the relays. The distance between the AP and the destination is denoted by $d_0$ and there is not a direct link between the AP and the destination e.g., due to obstacles or severe shadowing. We assume that $d_i$ denotes the Euclidean distance between AP and the $i$-th relay $R_i$ with $1 \leq i \leq N$, while $c_i$ denotes the distance between $R_i$ and $D$.  

\subsection{Cooperative protocol}
The relay nodes have half-duplex capabilities and employ a DF policy. Time is slotted and communication is performed in two orthogonal time slots. Specifically, in the first time slot i.e., broadcast phase, the AP broadcasts the signal to the relay nodes. In the second time slot, a single relay (or a group of relays), which successfully decoded the source signal, forwards the signal to the destination according to the rules of the considered DF-based relay selection scheme. It is worth noting that the selected relay transmits at a rate $r_0$ (same with the rate of the source).

\begin{figure}
\centering
\includegraphics[width=0.8\linewidth]{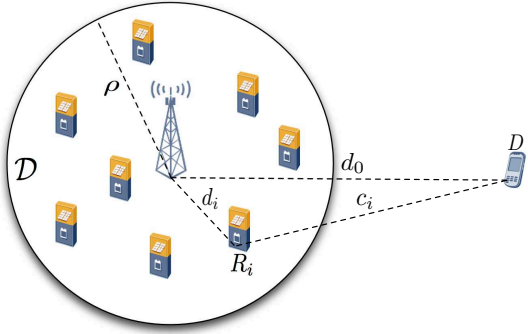}\\
\vspace{-0.4cm}
\caption{The network model; an AP communicates with the destination $D$ via the help of relays.}\label{model}
\end{figure}

\subsection{Battery model}\label{batt}

Each relay node is equipped with a single energy storage device (e.g., battery, capacitor, etc.), which can store energy for future use.  At the beginning of the broadcast phase, the battery can be either fully charged or empty (two states) \cite[Sec. II.A]{RUI2}; a relay with a charged battery is active and can participate in the relaying operation, while a relay with an empty battery is in harvesting mode and uses the AP's signal for WPT purposes. We assume that the battery can be used either for charging or transceiver operations (a single battery/capacitor cannot be charged and discharged simultaneously \cite{RUI3}). Fig. \ref{battery} schematically shows the battery model and the associated switching mechanism.   

The WPT process is based on the AP's transmissions and WPT from relaying signals is negligible i.e., relay nodes transmit in a much lower power than the AP and thus cannot satisfy the rectennas' sensitivity requirements. {Since the battery has only two discrete states (full or empty),  an empty battery can be fully charged when the input power is larger than the size of the battery. Let $P_r=\Psi P$ be the size/capacity of the battery with $\Psi<1$. The battery of the selected relay (if it is charged) is connected to the transceiver's circuit and is fully discharged at the end of the second/relaying time slot\footnote{This assumption ensures that the state-transition probability matrix is independent of the channel conditions (path-loss, fading) and simplifies the analysis.}. If the decoding is successful, most of the available energy is used for transmission i.e., the relay node transmits with a fixed power $P_r$ (transmission  dominates the total energy consumption). If the decoding is not successful, the available energy is mainly consumed in order to operate the basic receiver's components  (e.g., RF electronics, signal processing, etc.), transmit a Negative-Acknowledgment signal to indicate unsuccessful decoding as well as for static maintenance operations (e.g., cooling system).

\begin{figure}
\centering
 \includegraphics[width=0.7\linewidth]{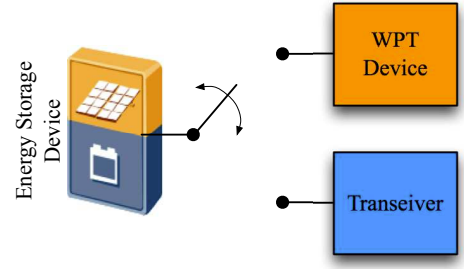}\\
\vspace{-0.4cm}
\caption{The battery model; the battery is either connected to the WPT device for charging or to the transceiver device for discharging. }\label{battery}
\end{figure}

\noindent{\it Markov chain model:} The charging/discharging behavior of the battery can be represented by a finite-state Markov chain (MC) with two states $\{s_0,s_1\}$, where the state $s_0$ indicates that the battery is empty and the state $s_1$ that the battery is charged. For the sake of presentation, we introduce the function $S(R_i)$, which returns the state of the battery for the $i$-th relay i.e., $S(R_i)=s_j$ means that the battery of the $i$-th relay is in the $s_j$ state, where $j\in\{0,1\}$. The state-transition probability matrix $\pmb{\Pi}$ can be written as
\begin{align}
\pmb{\Pi}=\left[ \begin{array}{cc} 1-\pi_0 & \pi_0\\ \pi_1 & 1-\pi_1   \end{array} \right],
\end{align}  
where $\pi_0$ denotes the probability that the input power is greater than the battery size of the relay, and $\pi_1$ is the probability that a charged relay node is selected for relaying. Both probabilities are defined in the following section for the proposed relay selection schemes. If $\pmb{\eta}=[\eta_0\;\eta_1]$  denotes the stationary steady-state probability vector of the MC, by using basic queueing theory, we have

\begin{align}
\pmb{\eta}\pmb{\Pi}=\pmb{\eta}. \label{first_eq}
\end{align}
The solution of the above system of linear equations gives the steady-state distribution of the battery, which is equal to
\begin{align}
\eta_0=\frac{\pi_1}{\pi_0+\pi_1},\;\;\eta_1=\frac{\pi_0}{\pi_0+\pi_1}. \label{sst}
\end{align}

\noindent {\it Extension to a multi-state battery model:} If a relay node becomes active only when is fully charged,  a battery model with $L+2$ states follows the same analysis with the two-state case; it only affects the computation of the steady-state distribution.  Specifically,  the battery of size $P_r$ is discretized in $L+2$ energy levels $\epsilon_i=i P_r/(L+1)$ with $i=0,\ldots, L+1$. In this case, the harvesting threshold is smaller than the size/capacity of the battery and is equal to the difference between two successive energy levels i.e., $P_r/(L+1)$. We define $L +2$ corresponding energy states, $s_i$, with $i =0, 1, \ldots,L +1$ and thus the battery is in state $s_i$ when its stored energy is equal to $\epsilon_i$. If $p_{i,j}$ denotes the  transition probability $\mathbb{P}\{s_i \rightarrow s_j\}$, the state-transition probability matrix $\pmb{\Pi}$ has a dimension $(L+2)\times (L+2)$. A steady state $\pmb{\eta}$ exists and can be calculated by solving the linear system in \eqref{first_eq}. We note that the probabilities $p_{i,j}$ with $i\neq L+2$ depend on the channel statistics and can be calculated using the procedure in \cite{KRI3}. On the other hand, $p_{L+2,j}=0$ for $0< j<L+2$, while the transition probabilities  $p_{L+2,0}$, $p_{L+2,L+2}$ depend on the relay selection schemes and follow the same analytical steps with the two-state case.

\subsection{Channel model}

We assume that wireless links suffer from both  small-scale block fading and large-scale path-loss effects.   The fading is Rayleigh distributed so the power of the channel fading is an exponential random variable with unit variance. We denote by $h_i$ and $g_i$, the channel coefficients for the links between AP and $R_i$, $R_i$ and $D$, respectively. The path-loss model assumes that the received power is proportional to $1/(1+d^{\alpha})$ where $d$ is the Euclidean distance between the transmitter and the receiver, $\alpha>2$ denotes the path-loss exponent and we define $\delta\triangleq 2/\alpha$. 
The considered path-loss model ensures that the path-loss is always larger than one for any distance i.e., $1+d^\alpha>1$, even if $d<1$. The instantaneous fading channels are known only at the receivers, except if otherwise defined. In addition, all wireless links exhibit additive white Gaussian noise (AWGN) with variance $\sigma^2$; $n_i$ denotes the AWGN at the $i$-th node.  

The proposed relay selection schemes are analyzed in terms of outage probability i.e., the probability that the destination cannot support the target spectral efficiency. If $C(x)=\frac{1}{2}\log_2(1+x)$ denotes the instantaneous capacity for  a wireless link (one-hop transmission) with SNR $x$, the associated outage probability is given by $\mathbb{P}\{C(x)<r_0 \}=\mathbb{P}\{x<\epsilon \}$ where $\epsilon\triangleq 2^{2r_0}-1$; to simplify the notation, we define $\Xi\triangleq \epsilon \sigma^2 /P$.

\section{Relay selection and performance analysis}\label{SEC2}
In this section,  we present several relay selection schemes that correspond to different complexities. The outage probability performance as well as the diversity gain of the proposed schemes are derived in closed form.

\subsection{Random relay selection}\label{RRS}

The random relay selection (RRS) scheme does not require any feedback about the battery status or the location of the relay nodes and selects a relay in a random way. It corresponds to a low implementation complexity and is appropriate for networks with strict power/bandwidth constraints. Without loss of generality,  we consider that the $i$-th relay is selected to assist the source. If the $i$-th relay is fully charged, it attempts to decode the source signal and acts as a relay in case of successful detection. If the $i$-th relay has an empty battery, it switches to the harvesting mode and uses the received signal for WPT purposes; in this case, the relay remains inactive during the cooperative slot and an outage event occurs. On the other hand, the non-selected relays with empty batteries switch to harvesting mode and use the source signal for potential charging. During the first time slot, the received signal at the $i$-th relay can be written as
\begin{align}
y_i=\sqrt{P}\frac{h_i}{\sqrt{1+d_i^\alpha}}s+n_i, \label{m1}
\end{align}
where $s$ denotes the source signal with normalized power. If the $i$-th relay is active, the associated received SNR is equal to 
\begin{align}
{\sf SNR}_i=\frac{P|h_i|^2}{(1+d_i^\alpha)\sigma^2}.
\end{align}
\noindent If the $i$-th relay is inactive (empty battery), the input power at the WPT device is equal to
\begin{align}
P_h=\zeta\frac{P|h_i|^2}{(1+d_i^\alpha)}, \label{harv1}
\end{align}
where $\zeta$ denotes the WPT conversion efficiency and \eqref{harv1} assumes that energy harvesting from AWGN is negligible (very small compared to the desired signal \cite{ZHA}). In our analysis we assume $\zeta=1$ without loss of generality.   In the second time slot, the received signal at the destination can be written as 
\begin{align}
y_D=\sqrt{P_r}\frac{g_i}{\sqrt{1+c_i^\alpha}}s+n_D, 
\end{align}
and the associated SNR is given by 
\begin{align}
{\sf SNR}_D=\frac{P_r|g_i|^2}{(1+c_i^\alpha)\sigma^2}. \label{dest1}
\end{align}

In order to derive the outage probability of the RRS protocol, firstly, we need to calculate the steady-state distribution of the battery. 

\begin{proposition}\label{thr1}
The steady-state probability that a relay node is charged at the beginning of the broadcast phase, for the RRS scheme, is given by
\begin{align}
\eta_1^{\emph{RRS}}=\frac{\delta \exp \left(-\Psi\right)\frac{\gamma(\delta, \Psi \rho^\alpha)}{\Psi^{\delta}}}{\delta \exp \left(-\Psi\right)\frac{\gamma(\delta, \Psi \rho^\alpha)}{\Psi^{\delta}}+\frac{1}{\lambda \pi}}. \label{eta1}
\end{align}
\end{proposition}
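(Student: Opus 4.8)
The plan is to invoke the steady-state expression \eqref{sst}, which already gives $\eta_1=\pi_0/(\pi_0+\pi_1)$, and thereby reduce the problem to evaluating the two transition probabilities $\pi_0$ and $\pi_1$ explicitly for the RRS scheme. Since $\eta_1$ depends only on the balance between these two quantities, I expect a common geometric factor (essentially the disc area $\pi\rho^2$) to cancel, which is what produces the stated closed form.

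First I would compute $\pi_0$, the probability that an empty battery becomes fully charged in one slot, i.e.\ the probability that the harvested input power in \eqref{harv1} exceeds the battery capacity $P_r=\Psi P$. Conditioning on the AP--relay distance $d_i=d$ and using that $|h_i|^2$ is a unit-mean exponential random variable, I get $\mathbb{P}\{P|h_i|^2/(1+d^\alpha)>\Psi P \mid d_i=d\}=\exp(-\Psi(1+d^\alpha))$. Because the relays form a homogeneous PPP inside $\mathcal{D}$, a relay's distance to the origin has density $f(d)=2d/\rho^2$ on $[0,\rho]$, so I average the conditional charging probability against $f$. The resulting integral $\int_0^\rho e^{-\Psi d^\alpha}\,d\,\mathrm{d}d$ is handled by the substitution $u=d^\alpha$, which turns it into $\tfrac{1}{\alpha}\int_0^{\rho^\alpha}e^{-\Psi u}u^{\delta-1}\,\mathrm{d}u$ with $\delta=2/\alpha$; a further rescaling $v=\Psi u$ identifies it as $\gamma(\delta,\Psi\rho^\alpha)/(\alpha\Psi^\delta)$. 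Collecting the prefactors (and using $2/\alpha=\delta$) yields $\pi_0=\rho^{-2}\,\delta e^{-\Psi}\,\gamma(\delta,\Psi\rho^\alpha)/\Psi^\delta$.

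Next I would determine $\pi_1$, the probability that a charged relay is the one selected. Since RRS picks a relay uniformly at random and independently of the battery state, a given relay is chosen with probability $1/N$; treating the number of relays by its mean over the disc, $N=\lambda\pi\rho^2$, gives $\pi_1=1/(\lambda\pi\rho^2)$. Substituting both expressions into $\eta_1=\pi_0/(\pi_0+\pi_1)$ and cancelling the common factor $\rho^{-2}$ between numerator and denominator reproduces \eqref{eta1}.

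I expect the main obstacle to be twofold. The first part is correctly evaluating the distance-averaged charging integral and recognizing it as a lower incomplete gamma function, since the substitutions must track the exponent $\delta-1$ precisely for the prefactors to collapse into the compact form $\delta e^{-\Psi}\gamma(\delta,\Psi\rho^\alpha)/\Psi^\delta$. The second part is justifying the replacement of the random relay count $N$ by its mean $\lambda\pi\rho^2$ in $\pi_1$; this is an approximation for a Poisson $N$, since $\mathbb{E}[1/N]\neq 1/\mathbb{E}[N]$ in general, and I would note that the clean cancellation of $\rho^2$ hinges on adopting this mean-count convention consistently across both transition probabilities.
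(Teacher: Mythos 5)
Your proposal is correct and follows essentially the same route as the paper's Appendix B: the same distance-averaged exponential tail computation yielding $\pi_0=\frac{\delta}{\rho^2}\exp(-\Psi)\gamma(\delta,\Psi\rho^\alpha)/\Psi^{\delta}$, the same mean-count (Jensen) approximation $\pi_1\approx 1/(\lambda\pi\rho^2)$, and the same substitution into $\eta_1=\pi_0/(\pi_0+\pi_1)$ with cancellation of $\rho^{-2}$. Your closing caveat that $\mathbb{E}[1/N]\neq 1/\mathbb{E}[N]$ is exactly the approximation the paper acknowledges via Jensen's inequality, noted to be tight for large $\lambda\pi\rho^2$.
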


\begin{proof}
See Appendix \ref{APB}.
\end{proof}

The expression in \eqref{eta1} shows that for highly dense networks $\lambda\rightarrow \infty$, the steady-state probability that a relay is charged approaches to $1$. This was expected, since as the density of the network increases, the relay selection probability approaches to zero ($1/N\rightarrow 0$) and therefore 
the relay nodes are most of the time in harvesting mode. On the other hand, the steady-state probability decreases as $\Psi$ increases ($\eta_1^{\text{RRS}}$ is a decreasing function of $\Psi$); as the harvesting threshold  increases (size of the battery), the probability to have an input power higher than the threshold decreases. The spectral efficiency does not affect the battery status of the relays, since uncharged relays observe only the energy content of the received signals, while charged relays are fully discharged  in case of selection (independently of the decoding status (success or failure)).

As for the outage probability performance of the RRS scheme, an outage event occurs when a) there are not any available relays in $\mathcal{D}$ i.e., $N=0$, b) the disc contains at least one relay ($N\geq 1$) but the selected relay has an empty battery, c) $N\geq 1$, the selected relay is fully charged but cannot decode the source signal, and d) $N\geq 1$, the selected relay is fully charged and decodes the source signal but the destination cannot support the targeted spectral efficiency. By analyzing the probability of these events, the outage probability for the RRS scheme is given by the following theorem. 

\begin{theorem}\label{thr2}
The outage probability achieved by the RRS scheme is given by \eqref{outag1}. 
\end{theorem}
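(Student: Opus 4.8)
The plan is to evaluate the outage probability by applying the law of total probability to the four mutually exclusive events (a)--(d) identified just before the statement. Event (a), $N=0$, contributes the void probability of the PPP, $\mathbb{P}\{N=0\}=\exp(-\lambda\pi\rho^2)$, since the number of relays in $\mathcal{D}$ is Poisson with mean $\lambda\pi\rho^2$. The remaining three events all require $N\geq 1$, so each will be weighted by $1-\exp(-\lambda\pi\rho^2)$. Because the battery state of a relay is governed by its past harvesting history and is therefore independent of the current fading realizations, I would treat the charged/empty status through the marginal steady-state probabilities $\eta_0^{\text{RRS}}$ and $\eta_1^{\text{RRS}}$ supplied by Proposition \ref{thr1}: event (b) then contributes $\eta_0^{\text{RRS}}$, while events (c) and (d) carry the factor $\eta_1^{\text{RRS}}$ (charged relay) times the appropriate link-level outage terms.

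Next I would compute the two link-level probabilities. Conditioned on the selected relay being at distance $d_i$ from the AP, the first-hop decoding fails with probability $\mathbb{P}\{{\sf SNR}_i<\epsilon\}=1-\exp(-\Xi(1+d_i^\alpha))$, using that $|h_i|^2$ is a unit-mean exponential and $\Xi=\epsilon\sigma^2/P$. Averaging $\exp(-\Xi d_i^\alpha)$ over the distance of the selected relay --- which, given $N\geq 1$, is uniform in the disc with density $2d/\rho^2$ --- and substituting $u=d^\alpha$ reduces the integral to a lower incomplete gamma function, exactly as in the proof of Proposition \ref{thr1} but with $\Xi$ in place of $\Psi$. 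Up to the common factor $1-\exp(-\lambda\pi\rho^2)$, this gives the event-(c) contribution $\eta_1^{\text{RRS}}\big(1-\exp(-\Xi)\tfrac{\delta}{\rho^2}\tfrac{\gamma(\delta,\Xi\rho^\alpha)}{\Xi^\delta}\big)$. Event (d) is the complementary first-hop-success probability times the second-hop failure probability; for the latter I would repeat the same averaging, now with transmit power $P_r=\Psi P$ (so the exponent scales as $\Xi/\Psi$) and with the relay--destination distance $c_i$, invoking the independence of $h_i$ and $g_i$ given the relay position.

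The main obstacle is the second hop. Unlike the AP--relay distance, the relay--destination distance $c_i$ is the distance from a point uniform in the disc centred at the AP to the \emph{fixed} destination located at distance $d_0$ from the AP; its law is geometrically awkward, and moreover $d_i$ and $c_i$ are correlated through the common relay position, so event (d) does not factor cleanly into a product of two marginal distance integrals. I expect the resolution to be either a tractable model (or approximation) for the distribution of $c_i$ or an explicit decoupling of the two hops, after which the remaining distance integral can be carried out in closed form --- the appearance of the Gaussian hypergeometric function $_2F_1$ in the notation suggests this is where it enters. Once both link-level terms are in closed form, I would collect the contributions of (a)--(d), with (b)--(d) sharing the weight $1-\exp(-\lambda\pi\rho^2)$, to assemble the claimed expression \eqref{outag1}.
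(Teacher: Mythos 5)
Your overall architecture matches the paper's: total probability over the four events, the void probability $\exp(-\lambda\pi\rho^2)$ for $N=0$, the factor $\eta_1^{\text{RRS}}$ from Proposition \ref{thr1} for the battery state, and the first-hop average over the uniform relay position reducing to $\frac{\delta}{\rho^2}\exp(-\Xi)\frac{\gamma(\delta,\Xi\rho^\alpha)}{\Xi^\delta}$ via the same incomplete-gamma computation as in Appendix \ref{APB}. The events also collapse, exactly as you set them up, to $\mathbb{P}\{N=0\}+\mathbb{P}\{N\geq 1\}(1-\eta_1^{\text{RRS}}Q)$ with $Q$ the product of the two per-hop success probabilities.

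The gap is that you stop at the second hop, and your guesses about how it is resolved are wrong. The paper does not produce a closed-form law for $c_i$ and no $_2F_1$ appears here (that function only enters in the multi-cell interference analysis of Appendix \ref{APH}). The resolution is elementary: place the relay at polar coordinates $(x,\theta)$ about the AP and use the cosine rule $c_i^2=x^2+d_0^2-2xd_0\cos\theta$, so that
\begin{align}
\mathbb{P}\{{\sf SNR}_D\geq\epsilon\}=\frac{1}{\pi\rho^2}\exp\!\left(-\frac{\Xi}{\Psi}\right)\int_0^{2\pi}\!\!\int_0^{\rho}\exp\!\left(-\frac{\Xi}{\Psi}\left(x^2+d_0^2-2xd_0\cos\theta\right)^{1/\delta}\right)x\,dx\,d\theta,
\end{align}
and this double integral is simply left unevaluated inside \eqref{outag1}. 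As for the correlation you flag between $d_i$ and $c_i$ through the shared relay position: the paper sidesteps it by writing $Q$ as the product of the two marginal averages, i.e., $\mathbb{E}[\,\cdot\,|d_i]$ and $\mathbb{E}[\,\cdot\,|c_i]$ are each averaged separately over the relay location. You are right that $\mathbb{E}[f(d_i)g(c_i)]\neq\mathbb{E}[f(d_i)]\,\mathbb{E}[g(c_i)]$ in general, so your instinct that the factorization needs justification is sound --- but the target expression \eqref{outag1} is precisely this product form, so to prove the theorem as stated you must adopt that decoupling rather than search for a joint distance law.
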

\begin{proof}
See Appendix  \ref{APC}.
\end{proof}

\begin{table*}
\begin{align}
\Pi_{\text{RRS}}=& \exp(-\lambda \pi \rho^2)\!+\!\big(1-\exp(-\lambda \pi \rho^2)\big)\!\Bigg[1-\eta_1^{\text{RRS}} \underbrace{\frac{\delta}{\pi \rho^4}\frac{\gamma(\delta, \Xi \rho^\alpha)}{\Xi^{\delta}}\exp \left(-\Xi \left(1+\frac{1}{\Psi} \right)\right)\int_{0}^{2\pi}\int_{0}^{\rho}\exp\left(-\frac{\Xi}{\Psi} (x^2+d_0^2-2x d_0\cos (\theta))^\frac{1}{\delta} \right)x dx d\theta}_{\triangleq Q} \Bigg]. \label{outag1}
\end{align}
\end{table*}

Theorem \ref{thr2} can be used to study the diversity gain of the RRS scheme, as shown in the following remark.

\begin{remark}\label{cor1}
For the special case with $P\rightarrow \infty$, $P_r\rightarrow \infty$, $\Psi=P_r/P$ (constant ratio), $\rho<<d_0$, and $N\geq 1$, the outage probability of the RRS scheme is given by the expression
\begin{subequations}
\begin{align}
\Pi_{\emph{RRS}}^{\infty}&\approx 1-\eta_1^{\emph{RRS}}\left[1-\Xi\left(\frac{1+d_0^\alpha}{\Psi}+1\right)  \right] \label{as1} \\
&\rightarrow 1-\eta_1^{\emph{RRS}}. \label{skopelaki}
\end{align}
\end{subequations}
\end{remark}
\begin{proof}
When $N\geq 1$, the first factor in  \eqref{outag1} can be ignored, while $c_i\approx d_0$ due to the assumption $\rho<<d_0$. For $P\rightarrow \infty$ (e.g., $\Xi\rightarrow 0$), and by using the approximations $1-\exp(-x)\approx x$ and $\gamma(a,x)\approx x^a/a$ for  $x\rightarrow 0$, the expression in Remark \ref{cor1} can be obtained in a straightforward  way. This remark assumes $N\geq 1$ in order to highlight the impact of the battery on the achieved outage probability performance; it also holds for $N\geq 0$ for moderate/high values of $\lambda \pi \rho^2$. It is worth noting that the asymptotic expression in \eqref{skopelaki} is general and holds for any $d_0$.
\end{proof}
The above expression shows that the outage probability for the RRS scheme suffers from an outage floor at high SNRs, which depends on the steady-state distribution of the battery.

\subsection{Relay selection based on the closest distance}

The relay selection based on the closest distance (RCS) requires an a priori knowledge of the location of the relay nodes. We assume that the AP monitors the location of the relays through a low-rate feedback channel or a Global Positioning System mechanism, and selects the relay node which is closest to the AP.  The RCS scheme does not take into account battery status and/or instantaneous fading and also corresponds to a low implementation complexity,  specifically for scenarios with low mobility.  The relay node selected by the RCS policy can be expressed as
\begin{align}
R^*=\arg_{R_i \in \Phi}\min_{i=1,\ldots,N} d_i.
\end{align}    
The mathematical description of the RCS scheme follows the equations in  Section \ref{RRS}, by simply replacing the random relay node $R_i$ with $R^*$. The steady-state distribution of the battery for the RCS scheme is given by the following proposition.

\begin{proposition}\label{cor2}
The steady-state probability that a relay node is charged at the beginning of the broadcast phase,  for the RCS protocol, is given by
\begin{align}
\eta_1^{\emph{RCS}}=\eta_1^{\emph{RRS}}. 
\end{align}
\end{proposition}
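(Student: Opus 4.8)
The plan is to reduce the claim to showing that the random relay selection (RRS) and the closest-distance selection (RCS) schemes induce exactly the same pair of transition probabilities $(\pi_0,\pi_1)$. By the steady-state solution \eqref{sst}, the charged-state mass $\eta_1=\pi_0/(\pi_0+\pi_1)$ is a function of these two scalars only; hence it suffices to verify the two equalities $\pi_0^{\text{RCS}}=\pi_0^{\text{RRS}}$ and $\pi_1^{\text{RCS}}=\pi_1^{\text{RRS}}$ separately.

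First I would argue that the charging probability $\pi_0$ is unaffected by the selection rule. The empty$\to$charged transition is triggered solely by the event that the harvested input power in \eqref{harv1} exceeds the battery size $\Psi P$, i.e.\ $\{|h_i|^2/(1+d_i^\alpha)>\Psi\}$, which involves only the individual relay's own AP link. Crucially, in both schemes every relay whose battery is empty switches to harvesting mode (selection never suppresses harvesting at an empty node, and a selected-but-empty relay still harvests). Since the charging event and the spatial law of the relays (a homogeneous PPP of density $\lambda$ over $\mathcal D$) are identical across the two schemes, the corresponding (spatially averaged) charging probability $\pi_0$ is the same.

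Next I would treat the discharging probability $\pi_1$, the probability that a charged relay is the one chosen to forward. Both rules select \emph{blindly} with respect to the battery state -- RRS uses no side information and RCS uses only the distances $d_i$ -- so the selection event is independent of the battery, and $\pi_1$ collapses to the marginal probability that a given relay is selected. Conditioned on $N$ relays, their positions are i.i.d.\ uniform on $\mathcal D$ and therefore exchangeable, so the \emph{closest} relay is equally likely to be any one of them; each relay is thus selected with probability $1/N$ under RCS, exactly as under the uniform rule of RRS. Averaging over the Poisson law of $N$ in the same way as in the proof of Proposition \ref{thr1} then yields an identical $\pi_1$, and substituting both equalities into \eqref{sst} reproduces \eqref{eta1}, giving $\eta_1^{\text{RCS}}=\eta_1^{\text{RRS}}$.

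The delicate step is the equality of $\pi_1$, because under RCS the selection probability of a relay genuinely depends on its distance (nearby relays are favoured) and is positively correlated with its charging probability, whereas under RRS it is the constant $1/N$. The equality is therefore not pointwise in the relay's position; it holds for the \emph{typical-relay} transition probabilities that the two-state model in Section \ref{batt} uses, for which exchangeability forces the average selection probability to be $1/N$ irrespective of the rule. I would make this averaging convention explicit, since a position-resolved chain would not give exact equality, and I expect justifying this to be the main subtlety of the proof.
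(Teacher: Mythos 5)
Your proposal is correct and follows essentially the same route as the paper: the paper's proof likewise argues that both schemes ignore the battery state in selection, handle the selected relay identically, and that under RCS each relay is the closest with the same probability, so the selection probability is again $\mathbb{E}[1/N]$ and the pair $(\pi_0,\pi_1)$ is unchanged. Your explicit decomposition into the two transition probabilities and your remark that the equality holds only for the spatially averaged (typical-relay) chain -- not for a position-resolved one -- make precise an averaging convention the paper leaves implicit, but they do not change the argument.
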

\begin{proof}
Both RRS and RCS schemes select a single relay for transmission without taking into account the battery status. In addition, the selected relay is discharged at the end of the relaying slot independently of its decoding efficiency. Since both schemes handle the selected relay in the same way and the probability to select a relay in the RCS scheme is also $\mathbb{E}[1/N]\approx 1/\lambda \pi \rho^2$ (i.e., a relay can be the closest with the same probability), the steady-state distribution of the RCS scheme is equivalent to the RRS scheme.     
\end{proof}

For the outage probability performance, the scenarios for an outage event follow the discussion in Section \ref{RRS}. The following theorem provides an exact expression for the outage probability achieved by the RCS scheme. 

\begin{theorem}\label{thr3}
The outage probability achieved by the RCS scheme is given by  \eqref{oo2}.
\end{theorem}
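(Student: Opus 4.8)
The plan is to mirror the structure of the RRS analysis in Theorem \ref{thr2}, replacing the statistics of a uniformly chosen relay with those of the relay nearest to the AP. First I would decompose the end-to-end event exactly as in Section \ref{RRS}: an outage is avoided only if (i) the disc is non-empty ($N\geq1$), (ii) the selected relay $R^*$ is charged, (iii) the AP$\to R^*$ link supports $r_0$, and (iv) the $R^*\to D$ link supports $r_0$. Writing the complement, $\Pi_{\text{RCS}}=1-\mathbb{P}(\text{success})$, where the success probability factors into $\eta_1^{\text{RCS}}$ (the probability that $R^*$ is charged) times the probability that both hops clear their thresholds. By Proposition \ref{cor2} I may substitute $\eta_1^{\text{RCS}}=\eta_1^{\text{RRS}}$, so the only genuinely new work is the spatial averaging of the two link-success probabilities over the location of the nearest relay.

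The key ingredient is the distribution of the position of $R^*$. Using the void probability of the homogeneous PPP, the probability that no relay lies within distance $x$ of the AP is $\exp(-\lambda\pi x^2)$; hence the distance $d_{R^*}$ to the nearest relay has radial density $2\lambda\pi x\exp(-\lambda\pi x^2)$ on $[0,\rho]$, whose integral over the disc equals $1-\exp(-\lambda\pi\rho^2)=\mathbb{P}(N\geq1)$ and therefore already folds the $N=0$ outage contribution into the averaging. Because $\Phi$ is isotropic, the angular coordinate $\theta$ of $R^*$ (measured at the AP relative to the AP$\to D$ axis) is independent of $d_{R^*}$ and uniform on $[0,2\pi)$. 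This is precisely where RCS departs from RRS: in place of the uniform-in-disc radial law $2x/\rho^2$ one now carries the Gaussian-type weight $\exp(-\lambda\pi x^2)$.

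Next I would evaluate the two hops conditioned on the position $(x,\theta)$ of $R^*$. Since $|h_{R^*}|^2$ and $|g_{R^*}|^2$ are unit-mean exponentials, the AP$\to R^*$ link succeeds with probability $\exp(-\Xi(1+x^\alpha))$ and, by the law of cosines with $c_{R^*}^2=x^2+d_0^2-2xd_0\cos\theta$, the $R^*\to D$ link succeeds with probability $\exp\!\big(-\tfrac{\Xi}{\Psi}(1+c_{R^*}^\alpha)\big)$, the two being conditionally independent. Both factors are driven by the \emph{same} realisation of $(x,\theta)$, so the averaging is naturally a coupled two-dimensional integral rather than a product of one-dimensional averages. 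Pulling out the constant $\exp(-\Xi(1+1/\Psi))$ and integrating the product against the joint density of $R^*$ then yields the claimed expression \eqref{oo2} as an integral over $x\in[0,\rho]$ and $\theta\in[0,2\pi)$ of $x\exp\!\big(-\lambda\pi x^2-\Xi x^\alpha-\tfrac{\Xi}{\Psi}(x^2+d_0^2-2xd_0\cos\theta)^{1/\delta}\big)$, scaled by $\eta_1^{\text{RCS}}\lambda$.

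The main obstacle is that this integral does not collapse to the incomplete-gamma form enjoyed by RRS. In Theorem \ref{thr2} the first-hop average over $2x/\rho^2$ produced $\gamma(\delta,\Xi\rho^\alpha)/\Xi^\delta$ and factored out cleanly; here the extra $\exp(-\lambda\pi x^2)$ couples the two powers $x^2$ and $x^\alpha$ in the exponent, so no elementary antiderivative in $x$ exists, and the dependence of $c_{R^*}$ on $\theta$ keeps the angular integral non-elementary for general $\alpha$. I would therefore leave \eqref{oo2} as the exact double integral and defer any simplification to the high-SNR regime, where the approximations $1-e^{-u}\approx u$ and $\gamma(a,u)\approx u^a/a$ (as in Remark \ref{cor1}) can be applied to extract the outage floor.
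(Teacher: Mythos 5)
Your ingredients are the right ones --- the nearest-neighbour distance law from the void probability, the independent uniform angle, the law of cosines for $c_{i^*}$, and the conditional independence of the two fading gains given the relay position --- but the expression you arrive at is not \eqref{oo2}, so the argument does not prove the theorem as stated. You derive a \emph{coupled} double integral over the single position $(x,\theta)$ of $R^*$, in which the first-hop term $\Xi x^\alpha$ and the second-hop term $\tfrac{\Xi}{\Psi}(x^2+d_0^2-2xd_0\cos\theta)^{1/\delta}$ share the same radial variable. Equation \eqref{oo2}, by contrast, is a \emph{triple} integral with two independent radial variables ($x$ carrying $-\Xi x^\alpha-\lambda\pi x^2$ for the AP$\to R^*$ hop, and $r$ carrying the $R^*\to D$ exponent together with $-\lambda\pi r^2$), a prefactor $\lambda^2$, and a normalization $[1-\exp(-\pi\lambda\rho^2)]^2$. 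That structure can only arise from writing $Q'(\lambda)=\mathbb{P}\{{\sf SNR}_i\geq\epsilon\,|\,N\geq1\}\cdot\mathbb{P}\{{\sf SNR}_D\geq\epsilon\,|\,N\geq1\}$ and averaging each factor \emph{separately} over the nearest-relay position with the PDF \eqref{pdf1}; this is exactly what the paper does in Appendix \ref{APD} via \eqref{v1} and \eqref{v2}, plugged into the decomposition \eqref{outage1}. Your assertion that the coupled integral ``yields the claimed expression \eqref{oo2}'' therefore does not hold: the joint expectation over the shared position and the product of the two marginal expectations are different objects in general, precisely because both hop-success probabilities depend on the same $d_{i^*}$.

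The observation you make --- that both hops are driven by the same realisation of $(x,\theta)$, so the averaging is ``naturally a coupled two-dimensional integral rather than a product of one-dimensional averages'' --- is a fair point about the exact end-to-end success probability, but it cuts against you here: the same product-of-marginals factorization is already built into the RRS result \eqref{outag1} that you claim to be mirroring (there, too, $c_i$ is a deterministic function of $d_i$ and $\theta$, yet $Q$ is a product of two separate spatial averages). To prove Theorem \ref{thr3} you must adopt that factorization: compute the two conditional success probabilities as two independent expectations against $f_r$ in \eqref{pdf1}, multiply them, and only then substitute into \eqref{outage1}. If you keep the coupled integral, you are proving a different statement and would need to flag the discrepancy with \eqref{oo2} rather than assert equality.
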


\begin{table*}
\begin{align}
\Pi_{\text{RCS}}&= \exp(-\lambda \pi \rho^2)+\bigg(1-\exp(-\lambda \pi \rho^2)\bigg) \nonumber \\
&\;\;\times\Bigg[ 1-\eta_1^{\text{RCS}} \underbrace{\frac{2\pi\lambda^2 \exp\left(-\Xi(1+\frac{1}{\Psi}) \right)}{[1-\exp\big(-\pi \lambda \rho^2 \big)]^2}\int_{0}^{2\pi } \int_{0}^{\rho} \int_{0}^{\rho}\exp\left(-\frac{\Xi}{\Psi} (r^2+d_0^2-2r d_0\cos(\theta))^{\frac{1}{\delta}}-\lambda\pi r^2 -\Xi x^\alpha-\lambda \pi x^2 \right) r x dr dxd\theta}_{\triangleq Q'(\lambda)} \Bigg]. \label{oo2} 
\end{align}
\end{table*}

\begin{proof}
See Appendix \ref{APD}.
\end{proof}
In order to simplify the outage expression of the RCS scheme and study the diversity gain of the system, we provide the following remark.

\begin{remark}\label{cor3}
For the special case with $P\rightarrow \infty$, $P_r\rightarrow \infty$, $\Psi=P_r/P$ (constant ratio), $\rho<<d_0$, $N\geq 1$ and $\alpha=2$, the outage probability of the RCS scheme is given by 
\begin{subequations}
\begin{align}
\Pi_{\emph{RCS}}^{\infty}&\approx 1-\eta_1^{\emph{RCS}}\frac{\lambda\pi}{\lambda \pi+\Xi} \left(1+\Xi\rho^2 \frac{\exp(-\lambda \pi \rho^2)}{1-\exp(-\lambda \pi \rho^2)} \right) \nonumber \\
&\;\;\;\;\times \left[1-\Xi\left(\frac{1+d_0^2}{\Psi}+1\right)  \right], \label{as2} \\
&\rightarrow 1-\eta_1^{\emph{RCS}}. \label{skopelaki2}
\end{align}
\end{subequations}
\end{remark}
\begin{proof}
See Appendix \ref{APD2}.
\end{proof}

The above remark shows that the outage probability of the RCS scheme converges to an outage floor at high SNRs, which depends on the charging behavior of the battery.  By comparing the expressions in Remarks \ref{cor1} and \ref{cor2}, we can see that both schemes converge to the same outage floor and therefore become asymptotically equivalent (the convergence floor is independent of $\alpha$ and $d_0$). However, by carefully comparing the expressions in \eqref{as1} with $\alpha=2$ and \eqref{as2}, it can be seen that the RCS scheme converges to the outage floor faster than the RRS scheme.

\subsection{Random relay selection with battery information}

The random relay selection with battery information (RRSB) scheme randomly selects a relay node among the charged relays (if any). The RRSB scheme is based on a priori knowledge of the battery status and requires relays to feed their battery status ($1$-bit feedback) at the beginning of each broadcast phase\footnote{ Distributed implementation can be also considered where synchronized local timers at the relays (which could depend on the geographical location) allow the charged relays to access the channel according to the considered relay selection policy \cite{BLE}.}. The steady-state distribution of the battery is given as follows

\begin{proposition}
The steady-state probability that a relay node is charged at the beginning of the broadcast phase, for the RRSB scheme, is given by
\begin{align}\label{aprjens}
\eta_1^{\emph{RRSB}}=1-\frac{\Psi^\delta}{\delta \lambda \pi \exp(-\Psi)\gamma(\delta,\Psi\rho^\alpha)},
\end{align}
\end{proposition}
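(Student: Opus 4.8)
The plan is to instantiate the two-state Markov-chain framework of Section~\ref{batt}, where $\eta_1=\pi_0/(\pi_0+\pi_1)$ by \eqref{sst}, and to identify the charging probability $\pi_0$ and the scheme-dependent discharging probability $\pi_1$ for the RRSB policy. The charging transition $s_0\to s_1$ is triggered only when an empty relay harvests more than the battery capacity $P_r=\Psi P$, an event that does not involve the selection rule; hence $\pi_0$ coincides with its value in the RRS analysis. Using $P_h=P|h_i|^2/(1+d_i^\alpha)$ and imposing $P_h>\Psi P$ gives the condition $|h_i|^2>\Psi(1+d_i^\alpha)$, whose probability over the exponential fading is $\exp(-\Psi(1+d_i^\alpha))$. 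Averaging this over a relay placed uniformly in $\mathcal{D}$ (radial density $2d/\rho^2$) and substituting $u=d^\alpha$ produces the incomplete-gamma form $\pi_0=(\delta/\rho^2)\exp(-\Psi)\gamma(\delta,\Psi\rho^\alpha)/\Psi^\delta$; I would simply reuse this computation from Appendix~\ref{APB}.

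The heart of the argument is the discharging probability $\pi_1$, which is where RRSB departs from RRS. In RRSB the selected relay is drawn uniformly among the charged relays, so a tagged charged relay is discharged with probability $1/K$, where $K$ is the total number of charged relays, itself included. Under the mean-field assumption that each relay is independently charged with probability $\eta_1$, thinning the PPP yields a charged-relay process of density $\lambda\eta_1$, and by Slivnyak's theorem the number of \emph{other} charged relays $K-1$ inside $\mathcal{D}$ is Poisson with mean $\mu=\lambda\eta_1\pi\rho^2$. I would then evaluate the harmonic expectation through the standard identity $\mathbb{E}[1/(1+K')]=(1-e^{-\mu})/\mu$ for a Poisson $K'$ of mean $\mu$, obtained from the resummation $\sum_{k\ge0}\mu^k e^{-\mu}/(k+1)!=(e^{\mu}-1)e^{-\mu}/\mu$. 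This gives $\pi_1=(1-e^{-\mu})/\mu$, the charged-relay analogue of the $\mathbb{E}[1/N]\approx 1/(\lambda\pi\rho^2)$ step used for RRS.

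The final step is to resolve the fixed point created by $\mu$ depending on the unknown $\eta_1$. Substituting $\pi_0$ and $\pi_1$ into $\eta_1=\pi_0/(\pi_0+\pi_1)$ and simplifying $1/\eta_1-1=\pi_1/\pi_0$, the explicit $\eta_1$ prefactor in $\mu=\lambda\eta_1\pi\rho^2$ cancels, leaving the relation $1-\eta_1=(1-e^{-\mu})/(\lambda\pi\rho^2\pi_0)$ whose only residual $\eta_1$-dependence sits inside $e^{-\lambda\eta_1\pi\rho^2}$. In the dense-network regime relevant here ($\mu\gg1$) the approximation $1-e^{-\mu}\approx1$ collapses this to the closed form $\eta_1=1-1/(\lambda\pi\rho^2\pi_0)$, and inserting the expression for $\pi_0$ reproduces exactly \eqref{aprjens}. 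I expect the main obstacle to be twofold: first, justifying the mean-field independence of the battery states across relays, since they are driven by a common source but observe independent fading and are coupled by the single-discharge-per-slot constraint, which must be argued negligible; and second, controlling the $1-e^{-\mu}\approx1$ approximation, which is self-consistent because the resulting formula is meaningful (i.e.\ positive) precisely when $\lambda\pi\rho^2\pi_0>1$, the same regime in which $\mu$ is large.
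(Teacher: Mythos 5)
Your proposal is correct and follows the same overall route as the paper: the two-state chain of \eqref{sst}, the reuse of $\pi_0$ from the RRS analysis in Appendix~\ref{APB}, a discharge probability $\pi_1$ equal to the selection probability among the thinned (charged) PPP of density $\lambda\eta_1$, and the resolution of the resulting fixed point, which yields $\eta_1=1-1/(\lambda\pi\rho^2\pi_0)$ exactly as in \eqref{aprjens}. The one place you diverge is the treatment of the harmonic expectation: the paper simply invokes Jensen's inequality, $\pi_1=\mathbb{E}[1/N']\geq 1/\mathbb{E}[N']=1/(\lambda\eta_1\pi\rho^2)$, and uses the lower bound as the approximation, whereas you compute $\mathbb{E}[1/(1+K')]=(1-e^{-\mu})/\mu$ exactly for the Poisson count of the other charged relays (via Slivnyak) and only then drop the $e^{-\mu}$ term. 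Your variant buys an explicit, quantifiable error term and a self-consistency check (the neglected term is small exactly when $\lambda\pi\rho^2\pi_0>1$, the condition under which the closed form is positive, which the paper also states as a requirement); it also sidesteps the awkwardness of defining $\mathbb{E}[1/N']$ when $N'$ can be zero. The mean-field independence of battery states that you flag as a potential obstacle is assumed implicitly by the paper as well (it is inherent in modeling the charged relays as an independent thinning of $\Phi$), so no additional justification is expected here.
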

\begin{proof}
The charged relays form a PPP $\Omega$ which yields from the original PPP $\Phi$ by applying a thinning operation; its density is equal to $\lambda_\Omega=\lambda \eta_1^{\text{RRSB}}$; let $N'$ denote the number of the charged relays in the disc, which follows a Poisson distribution with parameter $\lambda_\Omega$.   
The RRSB scheme randomly selects a single relay from the PPP $\Omega$ and thus the  probability that a fully charged battery becomes empty is equal to the selection probability i.e., 
\begin{align}
\pi_1^{\text{RRSB}}=\mathbb{E}\left[\frac{1}{N'} \right]\geq \frac{1}{\mathbb{E}[N']}=\frac{1}{\lambda_\Omega \pi \rho^2}=\frac{1}{\lambda \eta_1^{\text{RRSB}}\pi \rho^2}, \label{appp}
\end{align}
where the expression in \eqref{appp} applies Jensen's inequality.  On the other hand, the probability that an empty battery becomes fully charged ($\pi_0$) is similar to the RRS scheme and follows the analysis in Appendix \ref{APB}. By substituting $\pi_0$ and $\pi_1^{\text{RRSB}}$ into \eqref{sst} and solving the linear equation for $\eta_1^{\text{RRSB}}$, we have $\eta_1^{\text{RRSB}}=\frac{\pi_0-\frac{1}{\lambda \pi \rho^2} }{\pi_0}$ with the required condition $\pi_0\geq 1/\lambda \pi \rho^2$.

It is worth noting that the proposed approximation allows the derivation of a simple closed-form expression for the steady-state distribution and becomes more efficient for $\lambda \pi \rho^2>>0$. The efficiency of the proposed approximation is discussed in Section \ref{NR}. 
\end{proof}

By combining the expressions in \eqref{eta1} and \eqref{appp}, we can show that $\eta_1^{\text{RRS}}=1/(2-\eta_1^{\text{RRSB}})\geq \eta_1^{\text{RRSB}}$. This observation is justified by the fact that the RRSB scheme limits the selection only between the charged relays and thus a charged relay can be discharged with a higher probability $\pi_1$.

As for the outage probability performance, an outage event occurs when a) there is not any charged relay in the disc, or b) the first or the second hop of the relay transmission is in outage. By analyzing these events, the following theorem is given.   

\begin{theorem}
The outage probability achieved by the RRSB scheme is given by the expression in
\begin{align}
\Pi_{\emph{RRSB}}\!=\!\exp(-\lambda_\Omega \pi \rho^2)+(1-\exp(-\lambda_\Omega \pi \rho^2))(1-Q), \label{RRSB}
\end{align}
where $Q$ is the success probability for the relaying link and is defined in \eqref{outag1}.
\end{theorem}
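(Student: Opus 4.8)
The plan is to condition on the number $N'$ of charged relays inside the disc $\mathcal{D}$ and to decompose the outage event along the two mutually exclusive scenarios identified before the statement. First I would establish the distribution of $N'$. Because the charged relays form the thinned PPP $\Omega$ of density $\lambda_\Omega = \lambda\,\eta_1^{\text{RRSB}}$, and thinning a homogeneous PPP again yields a homogeneous PPP, the restriction of $\Omega$ to the disc of area $\pi\rho^2$ contains a Poisson number of points with mean $\lambda_\Omega \pi \rho^2$. Hence $\mathbb{P}\{N'=0\}=\exp(-\lambda_\Omega \pi \rho^2)$ and $\mathbb{P}\{N'\geq 1\}=1-\exp(-\lambda_\Omega \pi \rho^2)$; these two factors are precisely the prefactors appearing in \eqref{RRSB}.

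Next I would treat the two scenarios. When $N'=0$ there is no charged relay available to cooperate, so the destination is in outage with probability one; this scenario contributes the first term $\exp(-\lambda_\Omega \pi \rho^2)$. When $N'\geq 1$, the RRSB scheme picks one charged relay uniformly at random, so the selected relay is charged by construction and I only need the probability that the relaying link is successful, i.e. that both the AP--relay hop and the relay--destination hop support the target rate $r_0$. The crucial observation is that a point chosen uniformly at random from a homogeneous PPP restricted to a bounded region is itself uniformly distributed over that region, independently of the density; therefore the location of the selected charged relay has exactly the same uniform law over $\mathcal{D}$ as the randomly selected relay in the RRS analysis. Since the Rayleigh fading on both hops and the destination geometry are unchanged, the relaying-link success probability averaged over the relay location and the fading is identical to the quantity $Q$ already computed in \eqref{outag1}. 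Consequently the outage probability conditioned on $N'\geq 1$ equals $1-Q$, giving the second term of \eqref{RRSB}; combining the two scenarios yields the claimed expression.

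The step I expect to be the main obstacle is the clean justification that the relaying-link success probability collapses to the same $Q$ as in the RRS scheme rather than to $\eta_1\,Q$. The subtlety is bookkeeping: in RRS the selected relay is drawn from the full process of density $\lambda$, so one must additionally multiply by the charging probability $\eta_1^{\text{RRS}}$, whereas in RRSB the selection set is the already-charged process $\Omega$, so the charging event is absorbed into the conditioning $N'\geq 1$ (equivalently into the thinned density $\lambda_\Omega$) and the factor $\eta_1$ must not be reintroduced. Making this distinction precise --- that conditioning on a relay being charged and selecting from $\Omega$ leaves the selected relay's spatial law uniform over $\mathcal{D}$ and the fading law exponential, so that only $Q$ survives --- is the part that requires care, while the Poisson void probability and the final recombination are routine.
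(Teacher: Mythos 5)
Your proposal is correct and follows essentially the same route as the paper: the paper's proof is exactly the decomposition $\Pi_{\text{RRSB}}=\mathbb{P}\{N'=0\}+\mathbb{P}\{N'\geq 1\}(1-Q)$ with the Poisson void probability of the thinned process $\Omega$. Your added justification that the uniformly selected charged relay remains uniform over $\mathcal{D}$ (so the same $Q$ applies without reintroducing $\eta_1$) is the detail the paper leaves implicit, and it is consistent with the paper's own homogeneous-thinning approximation.
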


\begin{proof}
Based on the above outage events, the outage probability for the RRSB scheme is written as 
\begin{equation}
\begin{split}
\Pi_{\text{RRSB}} & =\mathbb{P}\{N'=0\}+\mathbb{P}\{N'\geq 1 \} \nonumber \\
&\;\;\times [1-\mathbb{P}\{{\sf SNR}_i\geq \epsilon|N'\geq 1\}\mathbb{P}\{{\sf SNR}_D\geq \epsilon|N'\geq 1 \}] \nonumber \\
& =\exp(-\lambda_\Omega \pi \rho^2)+(1-\exp(-\lambda_\Omega \pi \rho^2))(1-Q). \nonumber 
\end{split}
\end{equation}

\end{proof}
When $P, P_r \rightarrow \infty$ with $\Psi=P_r/P$ (constant ratio), we can straightforwardly show that  $Q\rightarrow 1$ and thus the outage probability is dominated by the event where no relay is fully charged in the disc; the outage probability in \eqref{RRSB} asymptotically converges to the following outage floor 
\begin{align}
\Pi_{\text{RRSB}}^{\infty}\rightarrow\exp(-\lambda_{\Omega} \pi \rho^2)=\exp(-\lambda \eta_1^{\text{RRSB}}\pi \rho^2). \label{memes}
\end{align}

\subsection{Relay selection based on the closest distance with battery information}
The relay selection based on the closest distance with battery information (RCSB) scheme follows the principles of the RCS scheme, but takes into account the battery status of the relays nodes. Specifically, the RCSB scheme selects the closest charged relay according to 
\begin{align}
R^*=\arg_{R_i\in \Omega} \min_{i=1,\ldots,N'} d_i.
\end{align}
As far as the steady-state probability of the battery is concerned, the charging/discharging behavior of the batteries follows the discussion of the RRSB scheme.  

\begin{proposition}
The steady-state probability that a relay node is charged at the beginning of the broadcast phase, for the RCSB scheme, is given by
\begin{align}
\eta_1^{\emph{RCSB}}=\eta_1^{\emph{RRSB}}. 
\end{align}
\end{proposition}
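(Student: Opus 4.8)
The plan is to follow the same reasoning used in the proof of Proposition \ref{cor2}, where the equivalence $\eta_1^{\text{RCS}}=\eta_1^{\text{RRS}}$ was established. The steady-state distribution in \eqref{sst} depends only on the two transition probabilities $\pi_0$ and $\pi_1$, so it suffices to show that the RCSB and RRSB schemes induce identical values for both. If these two probabilities coincide, then substituting them into \eqref{sst} forces the two steady-state distributions to be equal.

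First I would argue that the charging probability $\pi_0$ coincides for the two schemes. The event that an empty battery becomes fully charged is triggered whenever the harvested input power exceeds the battery threshold; this is a per-relay property of the WPT process that does not depend on which relay is eventually selected for relaying. Since both RRSB and RCSB draw their candidate set from the same thinned PPP $\Omega$ and leave the harvesting behaviour of the empty relays untouched, $\pi_0$ is given by the same expression derived in Appendix \ref{APB} for the RRS scheme.

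The main step is to show that the discharge probabilities agree, i.e., $\pi_1^{\text{RCSB}}=\pi_1^{\text{RRSB}}=\mathbb{E}[1/N']$. Tracking the battery state of a tagged charged relay, $\pi_1$ is the probability that this relay is the one selected, conditioned on it being charged. Conditioned on the number of charged relays $N'$, the positions of the points of $\Omega$ are independent and uniformly distributed over $\mathcal{D}$ (a standard property of a homogeneous PPP conditioned on its count), so the charged relays are exchangeable. Under RRSB the tagged relay is chosen uniformly at random, giving selection probability $1/N'$; under RCSB it is chosen as the closest to the AP, but by exchangeability each of the $N'$ charged relays is equally likely to be the nearest, so its selection probability is again $1/N'$. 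Taking the expectation over $N'$ yields $\pi_1^{\text{RCSB}}=\mathbb{E}[1/N']=\pi_1^{\text{RRSB}}$.

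Having matched both $\pi_0$ and $\pi_1$, the claim $\eta_1^{\text{RCSB}}=\eta_1^{\text{RRSB}}$ follows at once, and in particular the closed-form approximation \eqref{aprjens} obtained for the RRSB scheme carries over verbatim, including the Jensen-type bound used there. The delicate point I expect to require the most care is the exchangeability claim: one must verify that the distance-based selection rule does not bias which charged relay is discharged, which relies on the spatial symmetry of the conditioned PPP and on the thinning that forms $\Omega$ being independent of the relay positions. Because the tagged-relay analysis only uses the marginal selection probability and not the joint law of the distances, the closest-distance structure of RCSB never enters the transition probabilities, and the equivalence holds exactly as in the batteryless RCS/RRS comparison.
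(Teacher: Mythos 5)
Your argument is essentially the paper's: the paper proves this by noting that the charging/discharging behaviour follows the RRSB discussion combined with the exchangeability argument already used for Proposition~2 (each charged relay is equally likely to be the closest, so the discharge probability is again $\mathbb{E}[1/N']$, and $\pi_0$ is unaffected by the selection rule). Your proposal is correct and matches that reasoning, merely spelling out the symmetry step more explicitly.
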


By straightforwardly applying the analysis of the RCS scheme for the outage events discussed in the RRSB scheme, the outage probability for the RCSB scheme is given by the following theorem. 

\begin{theorem}
The outage probability achieved by the RCSB scheme is given by the expression
\begin{align}
\Pi_{\emph{RCSB}}=\exp(-\lambda_{\Omega} \pi \rho^2)+(1-\exp(-\lambda_{\Omega} \pi \rho^2))[1- Q'(\lambda_{\Omega})], \label{RCSB2}
\end{align}
where $Q'(\cdot)$ is the success probability for the relaying link defined in \eqref{oo2}.
\end{theorem}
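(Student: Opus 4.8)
The plan is to recognize the RCSB scheme as the RCS selection rule applied not to the full process $\Phi$ but to the thinned process $\Omega$ of charged relays, and then to transplant the RCS outage analysis onto $\Omega$ by the substitution $\lambda\mapsto\lambda_\Omega$. First I would recall from the RRSB analysis that the charged relays form a homogeneous PPP $\Omega$ of density $\lambda_\Omega=\lambda\eta_1^{\text{RRSB}}$ (and, by the preceding proposition, $\eta_1^{\text{RCSB}}=\eta_1^{\text{RRSB}}$), so that the number $N'$ of charged relays inside $\mathcal{D}$ is Poisson-distributed with mean $\lambda_\Omega\pi\rho^2$. The RCSB rule $R^*=\arg\min_{R_i\in\Omega}d_i$ is then \emph{exactly} the RCS rule, only operating on $\Omega$ rather than on $\Phi$.

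Next I would partition the outage event into the two disjoint cases stated before the theorem: (a) no charged relay lies in the disc, i.e. $N'=0$, which forces an outage because no relay can forward; and (b) at least one charged relay exists, $N'\geq 1$, but the two-hop relaying link (closest charged relay decodes the source \emph{and} the destination decodes the relay) is in outage. The first term is immediate from the Poisson void probability, $\mathbb{P}\{N'=0\}=\exp(-\lambda_\Omega\pi\rho^2)$. The core claim for case (b) is that, conditioned on $N'\geq 1$, the success probability of the relaying link equals the RCS success probability $Q'$ of \eqref{oo2} with $\lambda$ replaced by $\lambda_\Omega$, namely $Q'(\lambda_\Omega)$. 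To justify this I would invoke the independence built into the thinning: the battery-status thinning that produces $\Omega$ from $\Phi$ is performed independently of the fading coefficients $h_i,g_i$ and of the relay–destination distances $c_i$, so each relay that survives into $\Omega$ retains exactly the SNR statistics of the RCS scheme, and a charged relay transmits with the same fixed power $P_r$. Consequently every ingredient of the Appendix~\ref{APD} derivation of $Q'$ carries over verbatim: the distance of the closest charged relay to the AP is governed by a PPP of density $\lambda_\Omega$ (void factor $\exp(-\lambda_\Omega\pi x^2)$, radial density $2\pi\lambda_\Omega x$, and conditioning normalization $1/[1-\exp(-\lambda_\Omega\pi\rho^2)]$), while the first-hop decoding factor $\exp(-\Xi(1+x^\alpha))$ and the second-hop factor $\exp(-\frac{\Xi}{\Psi}(1+c^\alpha))$ are unchanged. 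The same product-of-marginals step used to obtain $Q'$ then reproduces \eqref{oo2} with $\lambda\mapsto\lambda_\Omega$.

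Finally I would assemble the pieces as $\Pi_{\text{RCSB}}=\mathbb{P}\{N'=0\}+\mathbb{P}\{N'\geq 1\}[1-Q'(\lambda_\Omega)]$, which yields \eqref{RCSB2}. The only delicate point is step (b): one must verify that the RCS derivation depends on the spatial process \emph{solely} through the closest-distance distribution and the Poisson void/normalization factors, all of which transform correctly under independent thinning $\lambda\mapsto\lambda_\Omega$, and that the thinning leaves the fading and second-hop geometry statistically intact. Once that is confirmed, no new integration is needed and the result follows by the same manipulations already used for the RCS and RRSB outage expressions.
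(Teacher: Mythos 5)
Your proposal matches the paper's own (very terse) justification: the paper simply states that the result follows ``by straightforwardly applying the analysis of the RCS scheme for the outage events discussed in the RRSB scheme,'' i.e.\ the decomposition into $\{N'=0\}$ plus a two-hop link failure on the thinned PPP $\Omega$, with the RCS success probability $Q'$ evaluated at $\lambda_\Omega$. Your additional verification that the battery-status thinning is independent of the fading and of the second-hop geometry (so that the Appendix~\ref{APD} derivation carries over under $\lambda\mapsto\lambda_\Omega$) is exactly the implicit step the paper relies on, so the argument is correct and essentially identical.
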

When $P, P_r \rightarrow \infty$ with $\Psi=P/P_r$ (constant ratio),  by using the same arguments with \eqref{memes}, we can show that the outage probability in \eqref{RCSB2} asymptotically converges to the following outage floor 
\begin{align}
\Pi_{\text{RCSB}}^{\infty}\rightarrow\exp(-\lambda_{\Omega} \pi \rho^2)=\exp(-\lambda \eta_1^{\text{RCSB}}\pi \rho^2). \label{jkos}
\end{align}

\subsection{Distributed beamforming}

The distributed beamforming (DB) scheme selects all the relay nodes that are charged at the beginning of the broadcast phase; it is mainly used as a useful performance benchmark for the single-relay selection schemes.  More specifically, all the relay nodes with fully charged batteries  become active and attempt to decode the source signal.   The relays that are able to successfully decode the source signal, create a virtual multiple antenna array and coherently transmit the signal to the destination (virtual multiple-input single-output channel). The practical implementation of the distributed beamforming requires the encoding of the source message with a  cyclic  redundancy  check  (CRC)  code  for  error detection. In this way, only the relays whose CRCs check transmit in the second  phase of the protocol. The DB scheme requires a perfect time synchronization and signaling between the relay nodes as well as a CSI at the relays; a feedback channel ensures the knowledge of the $g_i$ channel coefficient at the $i$-th relay. The broadcast phase of the DB protocol follows the description of the RRS and RCS schemes and is given by the expressions in \eqref{m1}-\eqref{harv1}. The received signal at the destination, during the second phase of the protocol, can be written as

\begin{align}
y_D=\sqrt{P_r} \sum_{i \in \mathcal{C}}\frac{w_i g_i}{\sqrt{1+c_i^\alpha}}s+n_D,  \label{yd}
\end{align}
where $w_i=g_i^*/\sqrt{\sum_{i \in \mathcal{C}}|g_i|^2}$ is the precoding coefficient at the $i$-th relay that ensures coherent combination of the relaying signals at the destination, and $\mathcal{C}$ denotes the set of the charged relay nodes (index), which successfully decoded the source message and participate in the relaying transmission \cite{DIN2}. The associated SNR at the destination is given by 
\begin{align}
{\sf SNR}_D=P_r \sum_{i\in\mathcal{C}}\frac{|g_i|^2}{(1+c_i^\alpha)\sigma^2}.\label{memoides}
\end{align}

Equivalently to the previous protocols, the outage probability performance of the DB scheme depends on the probability of a battery to be charged at the beginning of the broadcast phase. Towards this direction, we state the following proposition. 

\begin{proposition}\label{thr4}
The steady-state probability that a relay node is charged at the beginning of the broadcast phase,  for  the DB protocol, is given by
\begin{align}
\eta_1^{\emph{DB}}=\frac{\delta \exp \left(-\Psi\right)\frac{\gamma(\delta, \Psi \rho^\alpha)}{\Psi^{\delta}}}{\delta\exp \left(-\Psi\right)\frac{\gamma(\delta, \Psi \rho^\alpha)}{\Psi^{\delta}}+\rho^2}. \label{expmem}
\end{align}
\end{proposition}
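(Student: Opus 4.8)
The plan is to reuse the two-state Markov-chain framework of Section~\ref{batt}: since the steady-state probability \eqref{sst} is entirely determined by the charging probability $\pi_0$ and the discharging probability $\pi_1$, it is enough to identify these two quantities for the DB policy and substitute them into $\eta_1=\pi_0/(\pi_0+\pi_1)$.

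First I would observe that the charging probability $\pi_0$ is common to every scheme, because charging depends only on the AP transmission and the battery size, not on the selection rule. An empty battery fills whenever the harvested power $P_h$ of \eqref{harv1} exceeds $P_r=\Psi P$, i.e. $|h_i|^2>\Psi(1+d_i^\alpha)$. Conditioning on $d_i$ and using that $|h_i|^2$ is unit-mean exponential gives $\mathbb{P}\{|h_i|^2>\Psi(1+d_i^\alpha)\mid d_i\}=\exp(-\Psi)\exp(-\Psi d_i^\alpha)$; averaging over the in-disc distance density $2r/\rho^2$ and reducing the integral to a lower incomplete gamma function (via $u=r^\alpha$, then $v=\Psi u$, and $\delta=2/\alpha$) yields $\pi_0=\delta\exp(-\Psi)\gamma(\delta,\Psi\rho^\alpha)/(\rho^2\Psi^\delta)$. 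This is exactly the quantity already extracted in the proof of Proposition~\ref{thr1} (Appendix~\ref{APB}), so I would simply cite it rather than recompute it.

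The step genuinely specific to DB is the discharging probability $\pi_1$. For the single-relay schemes a charged relay empties only if it is the one selected, giving $\pi_1=\mathbb{E}[1/N]$; but the DB policy activates \emph{all} charged relays at once, and by the battery model each activated relay is fully discharged at the end of the relaying slot irrespective of decoding success. Hence a charged relay moves to the empty state with certainty, so $\pi_1^{\text{DB}}=1$. Substituting $\pi_1=1$ together with the above $\pi_0$ into \eqref{sst} gives $\eta_1^{\text{DB}}=\pi_0/(\pi_0+1)$, and multiplying numerator and denominator by $\rho^2$ collapses this to \eqref{expmem}.

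The main obstacle is conceptual rather than computational: one must argue carefully that in DB \emph{every} charged battery (not merely a single selected one) is drained in each slot, which is what pins $\pi_1$ to the value $1$. The incomplete-gamma evaluation of $\pi_0$ is routine and is already carried out for Proposition~\ref{thr1}, so nothing new is required there, and the final algebraic simplification is immediate.
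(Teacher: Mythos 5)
Your proposal is correct and follows essentially the same route as the paper's own proof: it reuses the charging probability $\pi_0$ from \eqref{po} (Appendix \ref{APB}), argues that the DB policy discharges every charged relay so that $\pi_1^{\text{DB}}=1$, and substitutes into \eqref{sst} to obtain \eqref{expmem}. No gaps.
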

\begin{proof}
The behavior of an empty battery follows the discussion of the RRS and RCS schemes and therefore $\pi_0$ is given by the expression in \eqref{po}. On the other hand, the proposed DB scheme enforces a fully charged battery to be discharged at the end of the relaying slot,  since any charged relay is selected for potential transmission. According to the considered two-state battery model, any charged selected relay is discharged at the end of the relaying slot whether participating in the relaying transmission (successful decoding) or not. This means that the transition probability $\pi_1$ in this case is equal to one. By plugging \eqref{po} and $\pi_1=1$ into $\eta_1$ of \eqref{sst}, Proposition \ref{thr4} is proven. 
\end{proof}

The expression in \eqref{expmem} shows that the steady-state probability is independent of the density $\lambda$. The DB scheme enforces all the charged relays to be fully discharged at the end of the relaying time slot and thus the size of the network does not affect the battery status distribution.

Based on the description of the DB scheme, an outage event occurs when a) the relay set $\mathcal{C}$ is empty or b) when the coherent relaying transmission is in outage i.e., the destination is not able to decode the relaying signal. For the outage probability performance of the DB scheme, we state the following theorem. 

\begin{theorem}\label{Thr5}
The outage probability achieved by the DB scheme when $\rho<<d_0$, is given by
\begin{align}
\Pi_{\emph{DB}}&=\sum_{k=0}^{\infty} \frac{\gamma \left(k,\frac{\Xi(1+d_0^\alpha)}{\Psi} \right)}{\Gamma(k)} \exp(-\lambda' \pi \rho^2)  \frac{(\lambda' \pi \rho^2)^k}{k!}, \label{thr55}
\end{align}
where $\lambda'=\lambda \eta_1^{\emph{DB}}\frac{\delta}{\rho^2}\exp \left(-\Xi\right)\frac{\gamma(\delta, \Xi \rho^\alpha)}{\Xi^{\delta}}$.
\end{theorem}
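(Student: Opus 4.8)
The plan is to condition on the number of relays that both hold a charged battery and successfully decode the source signal, because given that number the coherently combined gain at the destination has a clean distribution. First I would note that the charged relays form a homogeneous PPP $\Omega$ of density $\lambda \eta_1^{\text{DB}}$ over $\mathcal{D}$ (Proposition \ref{thr4}), and that a charged relay at distance $d$ from the AP decodes correctly precisely when $|h|^2 \geq \Xi(1+d^\alpha)$. Since $|h|^2$ is unit-mean exponential and independent across relays, this occurs with the location-dependent probability $p(d)=\exp(-\Xi(1+d^\alpha))$. By the thinning theorem the decoding set $\mathcal{C}$ is itself a Poisson process, so $K\triangleq|\mathcal{C}|$ is Poisson distributed.

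Next I would compute the mean of $K$ as the integral of $\lambda \eta_1^{\text{DB}}\,p(d)$ over the disc,
\begin{align}
\mathbb{E}[K]=\lambda \eta_1^{\text{DB}} \int_0^{2\pi}\!\!\int_0^\rho \exp(-\Xi(1+r^\alpha))\, r\, dr\, d\theta. \nonumber
\end{align}
The radial integral is handled by the substitution $t=r^\alpha$, which turns it into a lower incomplete gamma function and gives $\int_0^\rho \exp(-\Xi r^\alpha)\, r\, dr=\tfrac{\delta}{2}\,\gamma(\delta,\Xi\rho^\alpha)/\Xi^{\delta}$. Hence $\mathbb{E}[K]=\lambda \eta_1^{\text{DB}} \pi \delta \exp(-\Xi)\,\gamma(\delta,\Xi\rho^\alpha)/\Xi^{\delta}$, which I would write as $\lambda'\pi\rho^2$ with $\lambda'$ exactly as defined in the theorem. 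This yields $\mathbb{P}\{K=k\}=\exp(-\lambda'\pi\rho^2)(\lambda'\pi\rho^2)^k/k!$, delivering the Poisson factor in \eqref{thr55}.

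It then remains to evaluate the conditional outage given $K=k$. Using $\rho<<d_0$ to set $c_i\approx d_0$ for every active relay, the SNR in \eqref{memoides} collapses to $\frac{P_r}{(1+d_0^\alpha)\sigma^2}\sum_{i\in\mathcal{C}}|g_i|^2$. Because the second-hop gains $g_i$ are independent of the first-hop gains $h_i$ that determine membership in $\mathcal{C}$, conditioning on $K=k$ leaves the $|g_i|^2$ as $k$ i.i.d. unit-mean exponentials, so their sum is $\mathrm{Gamma}(k,1)$. Writing the outage threshold as $\epsilon\sigma^2/P_r=\Xi/\Psi$ (using $P_r=\Psi P$), the conditional outage is the Gamma CDF $\gamma(k,\Xi(1+d_0^\alpha)/\Psi)/\Gamma(k)$. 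Summing $\mathbb{P}\{K=k\}$ against this conditional outage over all $k$ produces \eqref{thr55}; the $k=0$ term, read through the regularized incomplete gamma convention as $1$, correctly accounts for the certain-outage event $\mathcal{C}=\emptyset$.

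The hard part will be justifying the probabilistic structure rather than the integral: I must argue that the decoding thinning preserves the Poisson law and, crucially, that conditioning on the decoding event (which depends only on the $h_i$ and the locations) does not bias the distribution of the beamforming gains $|g_i|^2$, so that they remain i.i.d. exponential given $K=k$. The $c_i\approx d_0$ approximation must also be invoked explicitly, since it is what allows the sum over $\mathcal{C}$ to reduce to a single $\mathrm{Gamma}(k,1)$ variable; without it the destination SNR would be a weighted sum of exponentials with distance-dependent weights and would no longer be Gamma distributed.
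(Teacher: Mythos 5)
Your proposal is correct and follows essentially the same route as the paper: thin the charged-relay PPP by the decoding probability so that $K=|\mathcal{C}|$ is Poisson with mean $\lambda'\pi\rho^2$, use $\rho \ll d_0$ to reduce the destination SNR to a scaled sum of $K$ i.i.d. unit-mean exponentials, and average the Gamma CDF $\gamma(k,z_0)/\Gamma(k)$ against the Poisson law of $K$. The only cosmetic difference is that you carry out the location-dependent thinning and integrate $p(d)$ over the disc explicitly, whereas the paper folds the spatial average into the retention probability $1-F_u(\Xi)$ already computed in Appendix~\ref{APB}; the two computations give the identical $\lambda'$.
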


\begin{proof}
See Appendix \ref{APE}.
\end{proof}
It is worth noting that the sum in \eqref{thr55} quickly converges to the outage probability of the system and only a small number of terms is required (i.e., less than $10$ terms).  For the high SNR regime, we provide the following simplified expression, given as a remark.  

\begin{remark}\label{cor4}
For the special case with $P\rightarrow \infty$, $P_r\rightarrow \infty$, $\Psi=P_r/P$ (constant ratio), the outage probability of the DB scheme is given by 
\begin{align}
\Pi_{\emph{DB}}^{\infty}& \approx \exp(-\lambda' \pi \rho^2) I_0 \left(2\rho \sqrt{ \frac{\Xi(1+d_0^\alpha)\lambda' \pi }{\Psi} } \right)\nonumber \\
& \rightarrow \exp(-\lambda \eta_1^{\emph{DB}} \pi \rho^2). \label{jkos2}
\end{align}
\end{remark}

\begin{proof}
See Appendix \ref{APF}.
\end{proof}
The expression in Remark \ref{cor4} shows that for high SNRs, the outage probability of the DB scheme is equal to the probability that the set $\mathcal{C}$ is empty. This probability is an exponential function of $\eta_1^{\text{DB}}$ and approaches zero as
$\lambda \eta_1^{\text{DB}}\rho^2$ increases However, the DB scheme corresponds to a higher complexity, since it requires a continuous feedback to enable coherent combination of the relaying signals at the destination.    

\begin{remark}
For the special case with $P\rightarrow \infty$, $P_r\rightarrow \infty$, $\Psi=P_r/P$ (constant ratio) and $\lambda \pi \rho^2>>0$, the outage probabilities of the proposed relay selection schemes are ordered as
\begin{equation}
\Pi_{\rm{RCSB}}^{\infty}=\Pi_{\rm{RRSB}}^{\infty}<\Pi_{\rm{DB}}^{\infty}<\Pi_{\rm{RCS}}^{\infty}=\Pi_{\rm{RRS}}^{\infty}.
\end{equation}
\end{remark}

\begin{proof}
The proof follows straightforwardly from the expressions derived in \eqref{skopelaki}, \eqref{skopelaki2}, \eqref{memes}, \eqref{jkos} and \eqref{jkos2}. Regarding the battery-based selection schemes, which result in an exponential outage probability, we have $\eta_1^{\text{DB}}=\frac{1}{1+\nu \rho^2}<\eta_1^{\text{RRSB}}=1-\frac{\nu}{\lambda \pi}\Rightarrow \exp(-\lambda \eta_1^{\text{RRSB}} \pi \rho^2 )< \exp(-\lambda \eta_1^{\text{DB}}\pi \rho^2)$ with $\nu=\frac{\Psi^\delta}{\delta \exp(-\Psi)\gamma(\delta,\Psi\rho^\alpha)}<\lambda \pi$;  due to path-loss attenuation, practical implementations consider small $\Psi$ and thus this condition is satisfied with strict inequality.  The intuition behind this result is that in the asymptotic regime,  where the achieved outage probability floor is a function of $\eta_1$ and independent of the channel,  the selection of a single relay ensures connectivity while keeps more charged relays (and thus a higher $\eta_1$) than the DB scheme. 
\end{proof}

\section{Generalization to multi-cell scenarios}\label{SEC3}
In this section, we discuss the generalization of the proposed relay selection schemes for large-scale networks with multiple APs. We focus on the single-relay selection schemes (RRS, RCS, RRSB, RCSB) and we study their behavior when multi-user interference affects the decoding operation at the relay nodes (well-known model in the literature e.g., \cite{NUW,HIMA1}).  More specifically, we assume a multi-cell network, where APs form a PPP $\Upsilon$ with density $\mu$ on the plane $\mathbb{R}^2$. The relay nodes are also spatially distributed on the plane $\mathbb{R}^2$ according to a PPP with density $\lambda$. Each  relay  is connected to the closest AP and therefore all the relay nodes, which are located inside the same Voronoi cell, are dedicated to assist the corresponding AP. We study a typical cell, $\mathcal{D}$, where the AP is located at the origin (Slyvnyak's Theorem \cite{HAN}). For tractability, we assume that each Voronoi cell can be approximated by a disc of radius $\rho=1/4\sqrt{\mu}$ \cite{KOU} and each AP has a single destination at Euclidean distance $\rho$ in some random direction (worst case scenario). A direct link between AP and destination is not available and communication can be performed only through the DF relay nodes.  In addition, we assume that each relay node transmits in an orthogonal channel and thus destinations are free from multi-user interference. This setup is inline with modern network architectures, where the relay nodes (e.g., femtocells, distributed antenna systems, etc.) have cognitive radio capabilities and thus can opportunistically access the channel. By using appropriate sensing radio mechanisms, the relay nodes exploit unoccupied spectrum holes or white spaces in order to minimize/mitigate interference \cite{LI,KPO}. The topology considered is an extension of the single-cell case and therefore the system model follows the discussion in Section  \ref{sys_model} (except if some parameters are defined otherwise); Fig. \ref{gmodel} depicts the network topology for the multi-cell case. 

\begin{figure}[t]
\centering
\includegraphics[width=0.6\linewidth]{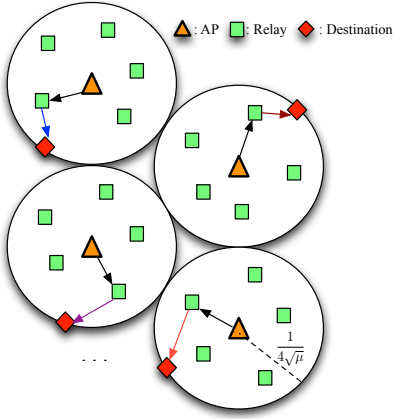}\\
\vspace{-0.3cm}
\caption{Network topology for the multi-cell scenario; Voronoi cells are approximated by discs of radius $1/4\sqrt{\mu}$ and each relay transmits in a dedicated channel.}\label{gmodel}
\end{figure}

\subsection{RRS scheme}   
According to the principles of the RRS scheme, each AP selects a relay node that is inside its Voronoi cell, in a random way. Since WPT highly depends on the Euclidean distance, we assume that WPT from external APs is negligible i.e., each relay node harvests energy only from the associated AP. Therefore, the multi-cell scenario does not change the analysis for the steady-state distribution of the battery and the probability that a relay node is charged, is given by Proposition \ref{thr1} with $\rho=1/4\sqrt{\mu}$. Let $R_i$ denote the selected relay for the typical cell; the signal to interference and noise ratio (SINR) at the $i$-th relay is given by 
\begin{align}
{\sf SINR}_i=\frac{\frac{P |h_i|^2}{d_i^\alpha}}{P I+\sigma^2},
\end{align}
where $I\triangleq \sum_{j\in \Upsilon/\{\mathcal{D} \}} \frac{H_j}{r_j^\alpha}$ denotes the aggregate (normalized) multi-user interference at the typical relay node,  $H_j$ denotes the channel power for the link between the $j$-th interfering  AP and the selected relay, and $r_j$ denotes the associated Euclidean distance. It is worth noting, that for the broadcast phase of the cooperative protocol, the path-loss function $1+d^\alpha$ is replaced by the conventional unbounded model $d^\alpha$ \cite{HAN}; this assumption significantly simplifies the analysis and allows the derivation of closed form expressions. For the second phase of the cooperative protocol, since relay nodes transmit in orthogonal channels, the SNR at the destination is given by \eqref{dest1}. For the outage probability of the RRS scheme, we state the following theorem.

\begin{theorem}\label{Thr6}
The outage probability of the RRS scheme for the multi-cell case is given by
\begin{align}
\Pi_{\emph{RRS}_0}=\Lambda(\lambda,1/4\sqrt{\mu}),
\end{align}
where $\Lambda(\cdot)$ is defined in \eqref{outag111}.
\end{theorem}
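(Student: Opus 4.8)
The plan is to compute the outage probability of the RRS scheme in the multi-cell setting by conditioning on the battery state of the randomly-selected relay and then integrating out the interference, the source-relay channel, and the relay-destination channel. The key structural change from the single-cell Theorem~\ref{thr2} is that the source-relay link now carries SINR rather than SNR, because the aggregate interference $I=\sum_{j\in\Upsilon/\{\mathcal{D}\}}H_j/r_j^\alpha$ from the other APs corrupts decoding at the typical relay; the destination link is unchanged (orthogonal channels keep $D$ interference-free) and the battery steady-state is still $\eta_1^{\mathrm{RRS}}$ with $\rho=1/4\sqrt{\mu}$, as the excerpt already grants. The outage-event decomposition is identical to the single-cell case: empty cell ($N=0$), selected relay uncharged, charged-but-first-hop-fails, or both hops succeed but the destination link fails.

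First I would write the outage as the complement of the joint success event, $\Pi=1-\eta_1^{\mathrm{RRS}}\,\mathbb{P}\{\mathsf{SINR}_i\geq\epsilon\}\,\mathbb{P}\{\mathsf{SNR}_D\geq\epsilon\}$ (weighted by $1-\exp(-\lambda\pi\rho^2)$ and augmented by the $N=0$ term). The destination factor $\mathbb{P}\{\mathsf{SNR}_D\geq\epsilon\}$ is exactly the quantity $Q$-type integral already handled in \eqref{outag1}, so the genuinely new object is the first-hop success probability with interference. The core computation is
\[
\mathbb{P}\{\mathsf{SINR}_i\geq\epsilon\}=\mathbb{E}_{d_i,I}\!\left[\exp\!\left(-\epsilon\,\frac{d_i^\alpha(PI+\sigma^2)}{P}\right)\right],
\]
obtained by averaging the exponential CCDF of $|h_i|^2$. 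Here I would exploit the unbounded path-loss $d^\alpha$ (the excerpt explicitly swaps $1+d^\alpha$ for $d^\alpha$ in the broadcast phase) so that the interference average becomes a Laplace transform $\mathcal{L}_I(s)$ of the shot noise evaluated at $s=\epsilon d_i^\alpha$.

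The main obstacle is evaluating $\mathcal{L}_I(\epsilon d_i^\alpha)$ and then integrating over the relay's distance $d_i$. For a PPP of interfering APs with Rayleigh-faded ($H_j\sim\exp(1)$) marks, the standard probability-generating-functional argument gives a Laplace transform of the form $\exp(-\mu\pi\,\mathbb{E}[H^\delta]\,\Gamma(1-\delta)(\epsilon d_i^\alpha)^{\delta})$, i.e.\ a stretched-exponential in $d_i^{2}$ (using $\delta=2/\alpha$), though care is needed with the interference-field boundary: since each relay associates with its \emph{nearest} AP, the nearest interferer sits outside the typical cell, introducing an exclusion region that must be incorporated into the integration limits (or approximated by the disc radius $\rho=1/4\sqrt{\mu}$, consistent with the paper's disc approximation). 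Combining this stretched-exponential interference factor with the Rayleigh distance density of the selected relay over the disc, and folding in the noise term $\exp(-\epsilon d_i^\alpha\sigma^2/P)$, I would carry out the radial integral to obtain a closed form, and finally multiply by the destination success factor; packaging everything as a single function $\Lambda(\lambda,\rho)$ evaluated at $\rho=1/4\sqrt{\mu}$ yields \eqref{outag111}. I expect the interference Laplace transform and the resulting non-elementary radial integral (likely expressible via an incomplete gamma or hypergeometric ${}_2F_1$ term, matching the paper's declared notation) to be the technically delicate step.
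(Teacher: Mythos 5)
Your proposal follows essentially the same route as the paper's proof: keep the single-cell outage decomposition and destination-link factor intact, and replace the first-hop success probability by $\mathbb{E}_{d_i}\bigl[\exp(-\Xi d_i^\alpha)\,L(\epsilon d_i^\alpha)\bigr]$, where $L(\cdot)$ is the interference Laplace transform computed via the probability generating functional with the interferers excluded from the disc of radius $\rho=1/4\sqrt{\mu}$ (it is precisely this exclusion, not the full-plane stretched exponential, that produces the ${}_2F_1$ term in \eqref{outag111}, as you anticipate). One small correction: for RRS the selected relay is uniform on the disc, so the radial density is $2r/\rho^2$ rather than the Rayleigh (nearest-neighbor) density you mention, which belongs to the RCS scheme.
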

\begin{proof}
See Appendix \ref{APH}.
\end{proof}

\begin{table*}
\begin{align}
\Lambda(\lambda, \rho)=& \exp(-\lambda \pi \rho^2)+\big(1-\exp(-\lambda \pi \rho^2)\big)\Bigg[1-\eta_1^{\text{RRS}}\underbrace{\frac{2}{\rho^2}\int_{0}^{\rho}\exp\left(-\Xi x^\alpha \right)\exp\left(-\frac{\pi}{16}\left[\frac{\epsilon x^\alpha}{\rho^{\alpha}}\frac{_2F_1\left(1,2\;; 2-\delta\;; \frac{1}{1+\frac{\rho^\alpha}{\epsilon x^\alpha}}\right)}{(1-\delta)\left(\frac{\epsilon x^\alpha}{\rho^\alpha}+1 \right)^2}\!-\!\frac{\epsilon x^\alpha}{\epsilon x^\alpha+\rho^\alpha}\right]  \!\right) x dx}_{\triangleq Q_1(\Xi, \rho)} \nonumber \\
&\times \frac{1}{\pi \rho^2}\exp\left(-\frac{\Xi}{\Psi}\right) \int_{0}^{2\pi}\int_{0}^{\rho}\exp\left(-\frac{\Xi}{\Psi} (r^2+\rho^2-2r \rho\cos (\theta))^\frac{1}{\delta} \right)r dr d\theta\Bigg], \label{outag111}
\end{align}
\end{table*}

For the high SNR regime, the outage probability of the RRS scheme can be simplified as follows

\begin{remark}
For the special case with $P\rightarrow \infty$, $P_r\rightarrow \infty$, $\Psi=P_r/P$ (constant ratio) and $N\geq 1$, the outage probability of the RRS scheme is given by
\begin{align}
\Pi_{\emph{RRS}_0}^{\infty}\rightarrow 1-\eta_1^{\emph{RRS}}Q_1(0,1/4\sqrt{\mu})
\end{align}
where $Q_1(\cdot)$ is defined in \eqref{outag111}.
\end{remark}
In comparison to the single-cell case, we can see that the multi-user interference affects the first hop of the relaying protocol and therefore the achieved outage probability converges to a higher outage floor i.e., for the same radius $\rho$, $\Pi_{\text{RRS}}^{\infty}=1-\eta_{1}^{\text{RRS}}<\Pi_{\text{RRS}_0}^{\infty}=1-\eta_{1}^{\text{RRS}} Q_1(0,\rho)$, where $Q_1(0,\rho)<1$ (see \eqref{outag111}) is the probability to successfully decode the source message at the relay node for high SNRs.

\subsection{RCS scheme}
The single-cell RCS scheme can be straightforwardly extended to the multi-cell scenario by applying the RCS policy at each Voronoi cell. More specifically, each AP selects the closest relay node among the relays which are located inside its circular coverage area. Equivalently to the RRS scheme, the multi-user interference affects only the broadcast phase of the cooperative protocol; therefore the steady state probability of the battery is equivalent to the single-cell case i.e., $\eta_1^{\text{RCS}}$ with $\rho=1/4\sqrt{\mu}$.
For the outage probability, we state the following theorem.

\begin{theorem}
The outage probability of the RCS scheme for the multi-cell case is given by
\begin{align}
\Pi_{\emph{RCS}_0}=\Theta(\lambda,1/4\sqrt{\mu}),
\end{align}
where $\Theta(\cdot)$ is defined in \eqref{brbr}.
\end{theorem}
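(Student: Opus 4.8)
The plan is to derive $\Pi_{\text{RCS}_0}$ by fusing the interference machinery developed for the multi-cell RRS scheme (Theorem \ref{Thr6}, Appendix \ref{APH}) with the nearest-neighbour distance distribution used in the single-cell RCS analysis (Theorem \ref{thr3}). As in those proofs, I would first decompose the outage event into the familiar four cases: (a) the typical cell is empty, (b) $N\geq 1$ but the closest relay has an empty battery, (c) the closest relay is charged yet fails to decode the source under interference, and (d) the closest relay decodes but the relaying link to the destination is in outage. The empty-cell probability is $\exp(-\lambda\pi\rho^2)$ with $\rho=1/4\sqrt{\mu}$, and conditioned on $N\geq 1$ the selected relay is charged with probability $\eta_1^{\text{RCS}}$, which by Proposition \ref{cor2} equals the $\eta_1^{\text{RRS}}$ of \eqref{eta1} evaluated at $\rho=1/4\sqrt{\mu}$. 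This reduces the task to the success probability of the two-hop link.

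For the broadcast (first) hop I would condition on the distance $r$ of the selected relay, which is now the nearest point of the relay PPP to the AP; conditioned on $N\geq 1$, its density on $[0,\rho]$ is $\tfrac{2\pi\lambda r\exp(-\lambda\pi r^2)}{1-\exp(-\lambda\pi\rho^2)}$, exactly as in the single-cell RCS derivation. Because $|h_i|^2$ is exponential, the SINR success probability factorises into a noise term $\exp(-\Xi r^\alpha)$ and the Laplace transform of the aggregate interference $I$ evaluated at $\epsilon r^\alpha$. Since the interfering APs lie outside the typical cell and the relay transmits on an orthogonal channel, this Laplace transform is identical to the one obtained in Appendix \ref{APH}; I can therefore import verbatim the closed-form interference factor appearing inside $Q_1(\Xi,\rho)$ of \eqref{outag111} (the $\exp(-\tfrac{\pi}{16}[\cdots])$ term carrying the $_2F_1$ function), merely replacing the uniform relay distance by $r$. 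Integrating the product against the nearest-neighbour density then yields the RCS analogue of the first-hop factor.

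The relaying (second) hop is interference-free because destinations use dedicated channels, so its treatment is unchanged from the single-cell schemes: with the destination placed at distance $\rho$ from the AP, the relay-to-destination distance obeys the law of cosines $(r^2+\rho^2-2r\rho\cos\theta)^{1/2}$, and the associated success probability $\exp\!\big(-\tfrac{\Xi}{\Psi}(1+(r^2+\rho^2-2r\rho\cos\theta)^{\alpha/2})\big)$ is averaged over the uniform angle $\theta$ and the nearest-neighbour relay distance. This reproduces the second integral of \eqref{outag111} with the uniform weight $\tfrac{1}{\pi\rho^2}$ replaced by the nearest-neighbour density. Multiplying the two hop factors by $\eta_1^{\text{RCS}}$ and assembling with the empty-cell and empty-battery terms gives $\Theta(\lambda,\rho)$; setting $\rho=1/4\sqrt{\mu}$ completes the proof.

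The hard part will be handling the statistical coupling created by the closest-relay selection consistently across the two hops. A single selected relay determines both the first-hop distance and the second-hop geometry, so strictly these are perfectly correlated; following the single-cell RCS treatment in \eqref{oo2}, I expect to adopt the \emph{decoupling approximation} that treats the two relay-AP distances as independent draws of the nearest-neighbour law, which is precisely what produces the product form (and hence a closed-form $\Theta$). I would then need to argue that this decoupling is accurate in the operating regime $\lambda\pi\rho^2\gg 0$, where the nearest relay is tightly concentrated, and to check that the interference Laplace transform remains independent of the relay's position so that the factor imported from Appendix \ref{APH} is legitimate.
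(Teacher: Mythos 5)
Your proposal is correct and follows essentially the same route as the paper, whose proof is a one-line remark that the multi-cell RCS result follows by inserting the nearest-neighbour PDF \eqref{pdf1} into the multi-cell RRS analysis of Appendix \ref{APH}; you reconstruct exactly that combination, including the verbatim reuse of the interference Laplace transform and the interference-free second hop with $d_0=\rho$. You are in fact more explicit than the paper about the correlation between the two hops induced by selecting a single nearest relay -- the paper's product form in \eqref{oo2} and \eqref{brbr} silently averages each hop separately over the nearest-neighbour law -- so your "decoupling" caveat is a fair reading of what the paper actually does.
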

\begin{proof}
The proof follows the analysis of the RCS scheme for the multi-cell case.  It can be obtained in a straightforward way by using the PDF of the closest distance given in \eqref{pdf1} in order to calculate the  probability to successfully decode the source message at the relay.  
\end{proof}

\begin{table*}
\begin{align}
\Theta(\lambda, \rho)&= \exp(-\lambda \pi \rho^2)+\bigg(1-\exp(-\lambda \pi \rho^2)\bigg) 
\nonumber \\
&\times \Bigg[ 1- \eta_1^{\text{RCS}} \underbrace{\frac{2\pi\lambda }{1-\exp\big(-\pi \lambda \rho^2 \big)} \int_{0}^{\rho}\exp\left( -\Xi x^\alpha-\lambda \pi x^2 -\frac{\pi}{16}\left[\frac{\epsilon x^\alpha}{\rho^{\alpha}}\frac{_2F_1\left(1,2\;; 2-\delta\;; \frac{1}{1+\frac{\rho^\alpha}{\epsilon x^\alpha}}\right)}{(1-\delta)\left(\frac{\epsilon x^\alpha}{\rho^\alpha}+1 \right)^2}\!-\!\frac{\epsilon x^\alpha}{\epsilon x^\alpha+\rho^\alpha}\right]  \right) x dx}_{\triangleq Q_1'(\lambda,\Xi, \rho)} \nonumber \\
&\times \frac{\lambda \exp\left(-\frac{\Xi}{\Psi} \right)}{1-\exp\big(-\pi \lambda \rho^2 \big)}\int_{0}^{2\pi} \int_{0}^{\rho} \exp\left(-\frac{\Xi}{\Psi} (r^2+\rho^2-2r \rho\cos(\theta))^{\frac{1}{\delta}}-\lambda\pi r^2 \right) r dr d\theta \Bigg], \label{brbr} 
\end{align}
\end{table*}

For the high SNR regime, the achieved outage probability can be simplified as follows

\begin{remark}
For the special case with $P\rightarrow \infty$, $P_r\rightarrow \infty$, $\Psi=P_r/P$ (constant ratio) and $N\geq 1$, the outage probability of the RCS scheme is given by
\begin{align}
\Pi_{\emph{RCS}_0^{\infty}}=1-\eta_1^{\emph{RCS}}Q_1'(\lambda,0,1/4\sqrt{\mu})
\end{align}
where $Q_1'(\cdot)$ is defined in \eqref{brbr}. 
\end{remark}

An interesting observation is that in contrast to the single-cell case, the RRS and RCS schemes do not converge to same outage floor at high SNRs.  The RCS scheme converges to a lower outage floor, since the selection of the closest relay is a mechanism to protect the source signal against multi-user interference.

\subsection{RRSB/RCSB schemes}\label{seccc}
Equivalently to the above discussion, the RRSB and RCSB schemes can be straightforwardly extended to the multi-cell scenario, by applying the RRSB and RCSB policies at each Voronoi cell. Since the generalization of the protocols does not modify the steady-state distribution of the battery and only affects the decoding probability at the relay nodes, the outage probability of the RRSB/RCSB is given by
\begin{align}
\Pi_{\text{RRSB}_0}&=\Lambda(\lambda \eta_1^{\text{RRSB}},1/4\sqrt{\mu} ), \\
\Pi_{\text{RCSB}_0}&=\Theta(\lambda \eta_1^{\text{RCSB}},1/4\sqrt{\mu}). \label{Feq}
\end{align}

By using similar arguments with the previous cases, the RRSB/RCSB schemes asymptotically converge to $\Pi_{\text{RRSB}_0}^{\infty}=\exp(-\lambda \eta_1^{\text{RRSB}})+(1-\exp(-\lambda \eta_1^{\text{RRSB}}))(1-\eta_1^{\text{RRSB}} Q_1(0,1/4\sqrt{\mu}))$ and $\Pi_{\text{RCSB}_0}^{\infty}=\exp(-\lambda \eta_1^{\text{RCSB}})+(1-\exp(-\lambda \eta_1^{\text{RCSB}}))(1-\eta_1^{\text{RCSB}} Q_1'(\lambda \eta_1^{\text{RCSB}}, 0,1/4\sqrt{\mu}))$, respectively.

\section{Numerical results}\label{NR}

Fig. \ref{fig1} plots the outage probability performance of the proposed relay selection schemes versus the transmitted power $P$. The first main observation is that the RRS and the RCS schemes converge to the same outage floor at high SNRs, as it has  been reported in Remarks \ref{cor1} and \ref{cor2}, respectively.  However, the RCS scheme slightly outperforms the RRS scheme at the moderate SNRs and thus converges to the outage floor much faster.  The RRSB and RCSB scheme, which take into account the battery status and avoid selection of uncharged relays, significantly improve the achieved performance and converge to the lowest outage floor; both schemes converge to the same outage floor at high SNRs. On the other hand, the DB scheme outperforms RRSB/RCSB schemes at low and moderate $P$. For these values, the transmission of the source signal by multiple relays through beamforming, boosts the SNR at the destination and improves the outage probability performance. In addition, it can be seen that as the density $\lambda$ increases, more relays  participate in the relaying operation and therefore the gap between DB and single-relay selection schemes increases. In the same figure, we plot the theoretical derivations given in \eqref{outag1}, \eqref{oo2}, \eqref{RRSB}, \eqref{RCSB2}, and \eqref{thr55}. The expressions in \eqref{outag1}, \eqref{oo2}, \eqref{RRSB}, \eqref{RCSB2} refer to the exact performance and perfectly match with the simulation results. On the other hand, the expression in \eqref{thr55} efficiently approximates the performance of the DB scheme; although  \eqref{thr55} holds for $\rho<<d_0$, we can see that it is a tight approximation for the considered setup with $d_0=2\rho$. 

Fig. \ref{fig2} shows the impact of the system parameters $d_0$ and $\Psi$ on the outage performance of the proposed relay selection schemes. As it can be seen in Fig. \ref{fig2}(a), as the distance between the AP and the destination increases, the outage probability increases; a larger distance corresponds to a higher path-loss degradation.  The DB scheme is more robust to the distance increase, since multiple relays cooperate through beamforming to overcome the path-loss attenuation. 
On the other hand, Fig. \ref{fig2}(b) shows that the parameter $\Psi$ is critical for the achieved performance of the system; this parameters characterizes the harvesting capability at the relay nodes  as well as the available power for relaying. As it can be seen, a small $\Psi$ facilitates the harvesting operation since the harvesting threshold is low, but the available power for relaying is not able to ensure successful decoding at the destination. If the threshold $\Psi$ is too high, the relay nodes transmit with a high power but the probability to be charged is decreased.

In Fig. \ref{revfigure}, we plot the outage probability performance for different spectral efficiencies $r_0$. As it can be seen, the convergence outage floor of the relay selection schemes is independent of the spectral efficiency. Increasing the spectral efficiency  affects only  the convergence rate of the selection schemes (slower convergence).  This observation has been expected, since according to our analysis, the convergence floor only depends on the steady-state distribution of the battery, which is not a function of the spectral efficiency. For the sake of presentation, we use small values of $r_0$ in our simulation results without loss of generality.

\begin{figure}[t]
\includegraphics[width=\linewidth]{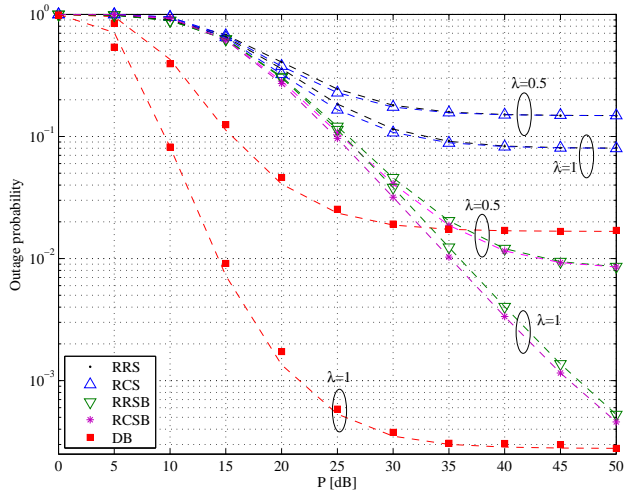}\\
\vspace{-0.3cm}
\caption{Outage probability versus $P$; $\rho=3$m, $d_0=2\rho$, $\Psi=0.1$, $\alpha=3$, $\sigma^2=1$, $r_0=0.01$ BPCU,  and $\lambda=\{1, 0.5 \}$; the dashed lines represent the theoretical results.}\label{fig1}
\end{figure}

\begin{figure}
\includegraphics[width=\linewidth]{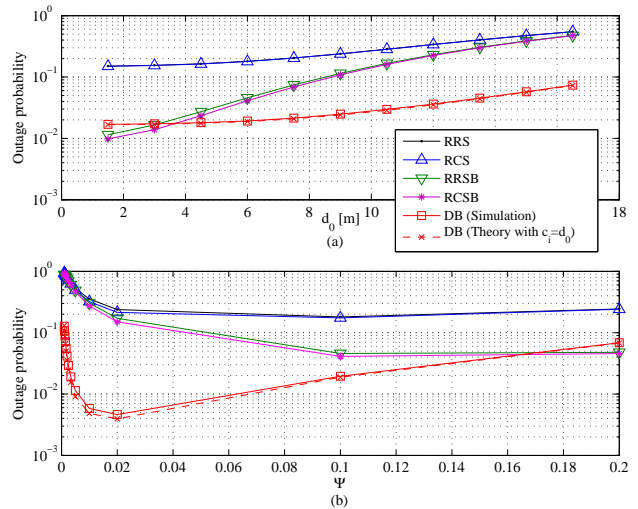}\\
\vspace{-0.3cm}
\caption{Outage probability versus a) $d_0$ and b) $\Psi$. Simulation parameters:  $P=30$ dB, $\rho=3$ m, $\alpha=3$, $\sigma^2=1$, $r_0=0.01$ BPCU, $\lambda=0.5$, a) $\Psi=0.1$, b) $d_0=2\rho$.}\label{fig2}
\end{figure}

 \begin{figure}
 \centering
\includegraphics[width=\linewidth]{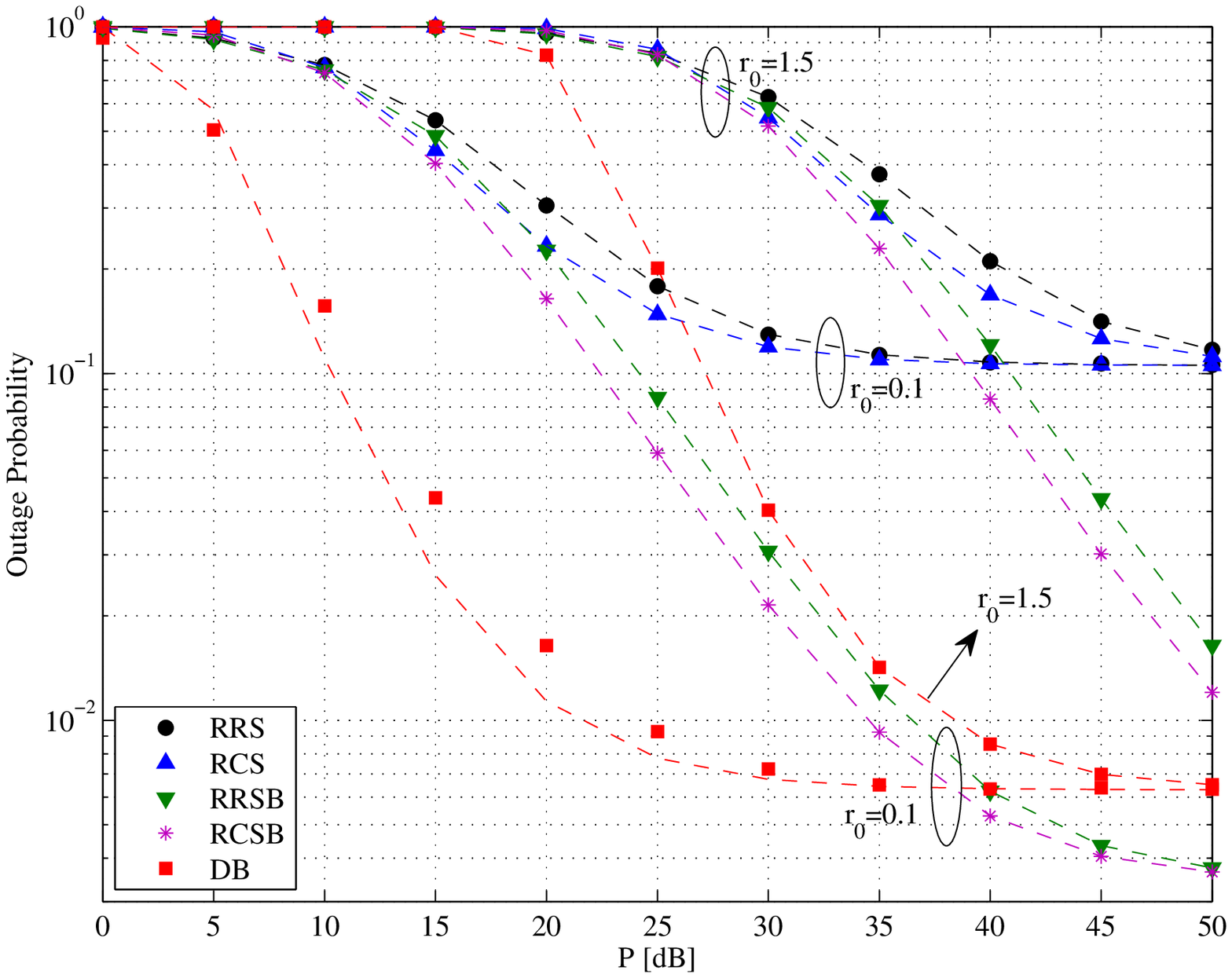}\\
\vspace{-0.3cm}
\caption{Outage probability versus $P$ for different values of $r_0$. Simulation parameters $d_0=\rho$, $\Psi=0.1$, $\alpha=3$, $\sigma^2=1$, $r_0=\{0.1,1.5\}$ BPCU, $\lambda=1$; the dashed lines represent the theoretical results.}\label{revfigure}
\end{figure}

\begin{figure}
\includegraphics[width=\linewidth]{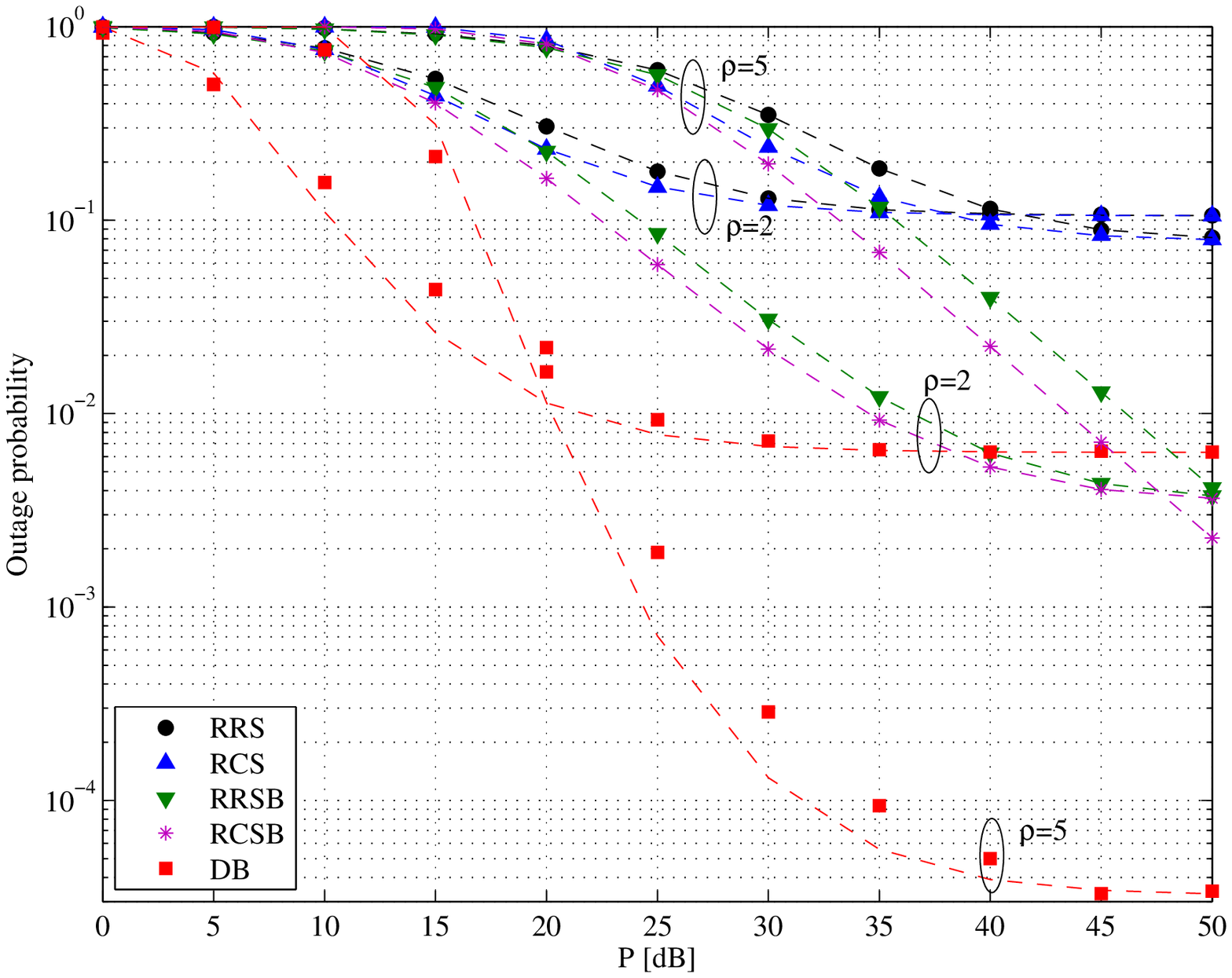}\\
\vspace{-0.3cm}
\caption{Outage probability versus $P$ for different values of $\rho$. Simulation parameters: $d_0=\rho$, $\Psi=0.1,$ $\alpha=3$, $\sigma^2=1$, $r_0=0.1$ BPCU, $\lambda=1$; the dashed lines represent the theoretical results.}\label{fig3}
\end{figure}

\begin{figure}
\includegraphics[width=\linewidth]{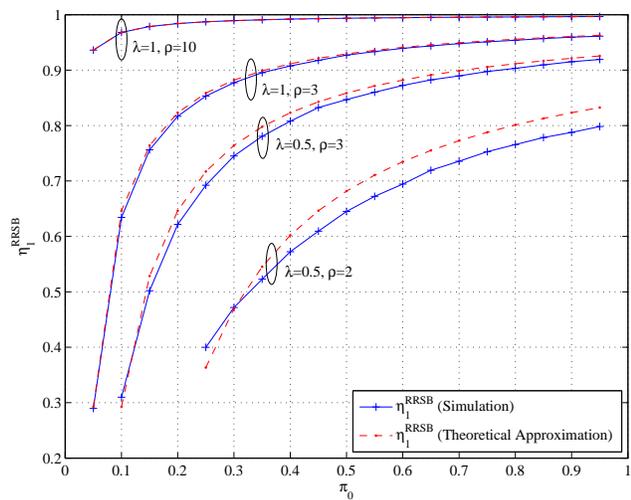}\\
\vspace{-0.3cm}
\caption{Steady state-distribution of the battery for the RRSB/RCSB schemes;  efficiency of the proposed approximation in \eqref{aprjens} for different settings ($\lambda, \rho$).}\label{Approximation}
\end{figure}

Fig. \ref{fig3} shows the impact of the radius $\rho$ on the achieved outage probability performance. The first main observation is that as $\rho$ increases, the performance of the proposed relay selection schemes is improved at high SNRs i.e., they converge to lower outage floors. This observation was expected, since as the area of the disc increases, the number of the relay nodes increases (with $\mathbb{E}[N]=\lambda \pi \rho^2$) and thus  a) more relays participate in the relaying operation (DB scheme), b) the probability to select an uncharged relay node is decreased (RRS, RCS schemes), c) the probability to have no charged relay in the disc is decreased (RRSB, RCSB schemes). However, for the low and moderate SNRs, it can be seen that as $\rho$ increases, the outage probability performance decreases. In this SNR regime, the outage probability is dominated by the channel path-loss attenuation and becomes more severe as $\rho$ increases; a larger $\rho$ corresponds to a higher distance ($d_0=\rho$) between source-destination and therefore increases the longest hop link.

Fig. \ref{Approximation} deals with the steady-state distribution of the battery in the RRSB and RCSB schemes. The closed form expression given in Proposition \ref{cor4} is based on the transition probability  $\pi_1^{\text{RRSB}}$, which uses the Jensen's approximation (inequality); see the  proof of Proposition \ref{cor4}. In order to show the efficiency of this approximation, Fig. \ref{Approximation} compares the exact $\eta_1^{\text{RRSB}}$ (given by simulation results) against the proposed approximation for different system parameters. As it can be seen, the proposed expression efficiently approximates the steady-state distribution and its accuracy is improved as the term $\pi_0-1/\lambda\pi \rho^2$ increases. Therefore, this approximation allows closed-form expression for the steady-state distribution and provides accurate results for the scenarios of interest i.e., $\lambda \pi \rho^2>>0$.   

\begin{figure}[t]
\includegraphics[width=\linewidth]{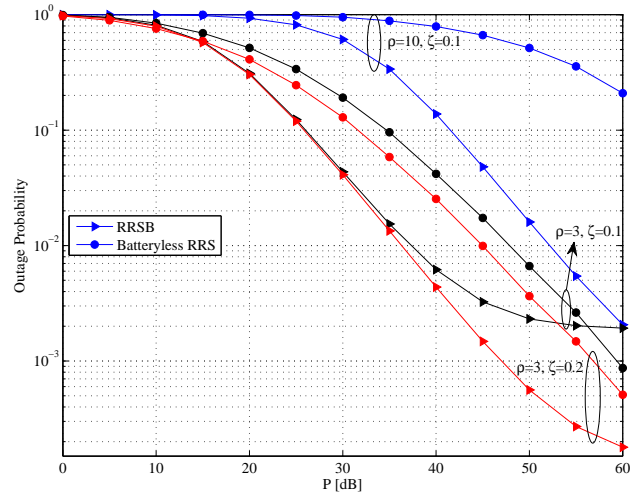}\\
\vspace{-0.3cm}
\caption{Outage probability for the RRSB and the batteryless RRS \cite{DIN1} schemes versus $P$. Simulation parameters: $d_0=\rho$, $\Psi=0.02,$ $\alpha=3$, $\sigma^2=1$, $r_0=0.01$ BPCU, $\lambda=1$.}\label{RRSfig}
\end{figure}

\begin{figure}
\includegraphics[width=\linewidth]{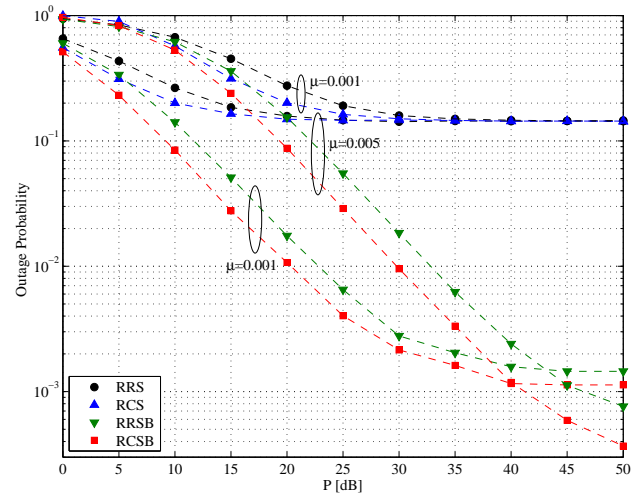}\\
\vspace{-0.3cm}
\caption{Outage probability versus $P$. Simulation parameters: $d_0=\rho$, $\Psi=0.1,$ $\alpha=3$, $\sigma^2=1$, $r_0=0.001$ BPCU, $\lambda=0.5$, and $\mu=\{0.001,0.005\}$; the dashed lines represent the theoretical results.}\label{fig10}
\end{figure}

In Fig. \ref{RRSfig}, we compare the proposed RRSB scheme with the batteryless RRS scheme proposed in \cite{DIN1}. The batteryless RRS scheme uses the PS-SWIPT technique at the relay node and the harvested energy is directly used to power the relaying link. It can be seen, that the RRSB scheme outperforms batteryless RRS at low and moderate SNRs and the gain increases as the conversion efficiency $\zeta$ decreases and/or the disc radius increases. For this SNR regime (which is expanded as $\zeta$ decreases), the PS-SWIPT technique becomes inefficient and the integration of the battery, which decouples the information and power transfer in time, significantly improves the performance. For high SNRs, the RRSB scheme suffers from an outage floor, while batteryless RRS provides a diversity gain equal to one \cite{DIN1}.  It is worth noting that the batteryless RRS scheme assumes perfect channel knowledge at the relay node, and requires appropriate electronic circuits in order to perform dynamic PS;  therefore it corresponds to a higher complexity than the proposed RRSB scheme.        

Fig. \ref{fig10} deals with the application of the proposed single-relay selection schemes to a multi-cell scenario with multi-user interference at the relay nodes (Section \ref{SEC3}). The RCS scheme outperforms the RRS scheme for the low/intermediate SNRs, while both schemes converge to similar outage probability floors; for this simulation setup, the success probabilities at the relay nodes  become almost equal at high SNRs  for both $\mu$ values. The consideration of the battery status into relay selection significantly improves the outage performance of the system and thus RRSB and RCSB schemes achieve lower outage probability floors. It is worth noting that RRSB and RCSB schemes do not converge to the same outage floor at high SNRs, since the decoding probability at the relays is different for the two schemes and affects their convergence (see Section \ref{seccc}). On the other hand, as the AP's density $\mu$ increases, the radius of the Voronoi cells decrease and the curves follow the discussion in Fig. \ref{fig3}. Theoretical results perfectly match with the simulation curves and validate our analysis.

\section{Conclusion}\label{SEC5}
In this paper, we have studied the problem of relay selection in WPC cooperative networks with spatially random relays. We assume that the relay nodes are equipped with batteries and use the received signal either for conventional decoding or battery charging. Based on a single-cell network topology, we investigate several relay selection schemes with different complexities. Their outage probability performance is derived in closed form by modeling the behavior of the battery as a two-state MC. We prove that the relay selection schemes suffer from an outage floor at high SNRs, which highly depends on the steady-state distribution of the battery. The RRS and RCS schemes achieve the worst outage probability performance and  converge to the same outage floor, while the consideration of the battery status significantly improves their achieved performance. The DB scheme is a promising solution for low/moderate SNRs and outperforms single-relay selection at the cost of a CSI. The proposed selection schemes are generalized to multi-cell network topologies, where multi-user interference affects relay decoding.  

\appendix

\subsection{Proof of Proposition \ref{thr1}}\label{APB}

Firstly, we calculate the cumulative distribution function (CDF) of the random variable $u_i \triangleq \frac{|h_i|^2}{1+d_i^\alpha}$; this result is essential for the derivation of the steady-state distribution. From the system model,  $|h_i|^2$ is an exponential random variable with unit variance and thus its CDF is equal to $F_{h}(x)=1-\exp(-x)$. The CDF of the random variable $u_i$ is given as follows 
\begin{subequations}
\label{pi000}
\begin{align}
F_{u}(x)&=\mathbb{P}\{u_i<x \} \nonumber \\
&=1-\mathbb{P}\left\{|h_i|^2> x(1+d_i^\alpha) \right\} \nonumber \\
&=1-\mathbb{E}\exp \left\{- x(1+d_i^\alpha)  \right\} \nonumber \\
&=1-\int_{\mathcal{D}}\exp \left\{- x(1+y^\alpha) \right\}f_{d}(y)dy \label{pi0} \\
&=1-\frac{1}{\pi \rho^2}\int_{0}^{2\pi} \int_{0}^{\rho}\exp \left\{- x(1+y^\alpha)  \right\}y dyd\theta \nonumber \\
&=1-\frac{2}{\rho^2}\exp \left(-x\right)\int_{0}^{\rho}y \exp \left(-x y^\alpha \right) dy \nonumber \\
&=1-\frac{\delta}{\rho^2}\exp \left(-x\right)\frac{\gamma(\delta, x \rho^\alpha)}{x^{\delta}}, \label{pi00}
\end{align}
\end{subequations}
where $f_{d}(x)=1/\pi\rho^2$ in \eqref{pi0} denotes the probability density function (PDF) of each point in the disk $\mathcal{D}$ and the result in \eqref{pi00} is based on \cite[Eq. 3.381.8]{GRAD}.  

By using the CDF of the random variable $u_i$ in \eqref{pi000}, the probability that an empty battery is fully charged during the broadcast phase can be expressed as  
\begin{align}
\pi_0=\mathbb{P}\left\{u_i> \frac{P_r}{P} \right\}=\frac{\delta}{\rho^2}\exp \left(-\Psi \right)\frac{\gamma(\delta, \Psi \rho^\alpha)}{\Psi^{\delta}}. \label{po}
\end{align}

On the other hand, if $N$ is the number of relays in $\mathcal{D}$, the probability to select a relay according to the RRS policy is equal to $1/N$; the probability that a charged relay becomes uncharged is equal to this selection probability and can be expressed as
\begin{align}
\pi_1^{\text{RRS}}&=\mathbb{E}\left[\frac{1}{N} \right]=\exp(-\lambda \pi \rho^2)\sum_{k=1}^{\infty} \frac{(\lambda \pi \rho^2)^k}{k!}\cdot \frac{1}{k} \nonumber \\
&\geq \frac{1}{\mathbb{E}[N]}=\frac{1}{\lambda \pi \rho^2}, \label{pi1}
\end{align} 
where \eqref{pi1} is based on Jensen's inequality and $\mathbb{E}[N]=\lambda \pi \rho^2$ is the average number of relays in $\mathcal{D}$ (from the definition of a PPP). It is worth noting that the probability $\pi_1^{\text{RRS}}$ is simplified by using Jensen's inequality; this approximation significantly simplifies our derivations and is tight as $\lambda \pi \rho^2$ increases. By combining \eqref{po} and \eqref{pi1} and substituting back into \eqref{sst}, the expression in Proposition \ref{thr1} is proven.

\subsection{Proof of Theorem \ref{thr2}}\label{APC}

The outage probability for the RRS scheme can be expressed as 
\begin{align}
\Pi_{\text{RRS}}&=\mathbb{P}\{N=0\}+\mathbb{P}\{N\geq 1, S(R_i)=s_0 \} \nonumber \\
&\;+\mathbb{P}\{N\geq 1, S(R_i)=s_1, {\sf SNR}_i<\epsilon \} \nonumber \\
&\;+\mathbb{P}\{N\geq 1, S(R_i)=s_1, {\sf SNR}_i \geq \epsilon, {\sf SNR}_D< \epsilon \} \nonumber \\
&=\mathbb{P}\{N=0\}+\mathbb{P}\{N\geq 1\}(1-\eta_1^{\text{RRS}})+\mathbb{P}\{N\geq 1\}\eta_1^{\text{RRS}} \nonumber \\
&\;\times \bigg(1-\underbrace{\mathbb{P}\{{\sf SNR}_i\geq \epsilon|N\geq 1\}\mathbb{P}\{{\sf SNR}_D\geq \epsilon|N\geq1\}}_{Q}\bigg) \nonumber \\
&=\mathbb{P}\{N=0\}+\mathbb{P}\{N\geq 1\}(1-\eta_1^{\text{RRS}}Q). \label{outage1}
\end{align}
From the Poisson distribution, we have
\begin{align} 
\mathbb{P} \{N=0\}=\exp(-\lambda \pi \rho^2). \label{out3}
\end{align}
In addition, 
\begin{align}
\mathbb{P}\{{\sf SNR}_i\geq \epsilon| N\geq 1\} &=\mathbb{P}\{u_i\geq \Xi \}=1-F_u(\Xi), \label{out1}
\end{align}
where $F_u(\cdot)$ is given in \eqref{pi000}. Finally, the success probability for the relaying link is written as
\begin{subequations}
\label{out2}
\begin{align}
&\mathbb{P}\{{\sf SNR}_D\geq \epsilon | N\geq 1 \}=\mathbb{P} \left\{|g_i|^2 \geq \frac{\epsilon  (1+c_i^\alpha)}{P_r} \right\} \nonumber \\
&=\mathbb{E}\exp\left(-\frac{\Xi (1+c_i^\alpha)}{\Psi} \right) \nonumber \\
&=\mathbb{E}\exp\left(-\frac{\Xi}{\Psi} \left[1+ (d_i^2+d_0^2-2d_i d_0\cos (\theta))^\frac{1}{\delta} \right]\right)  \label{cos_rule}\\
&=\int_{\mathcal{D}}\exp\left(-\frac{\Xi}{\Psi}\left[1+(x^2+d_0^2-2x d_0\cos (\theta))^\frac{1}{\delta}\right] \right)f_{d_i}(x)dx \nonumber \\
&=\frac{1}{\pi \rho^2}\exp\left(-\frac{\Xi}{\Psi}\right) \nonumber \\
&\;\;\;\times \int_{0}^{2\pi}\int_{0}^{\rho}\exp\left(-\frac{\Xi}{\Psi} (x^2+d_0^2-2x d_0\cos (\theta))^\frac{1}{\delta} \right)x dx d\theta,
\end{align}
\end{subequations}
where $c_i^2=d_i^2+d_0^2-2d_i d_0 \cos(\theta)$ in \eqref{cos_rule} holds by using the cosine law. By combining \eqref{out3}, \eqref{out1}, \eqref{out2} and substituting back into \eqref{outage1}, we prove Theorem \ref{thr2}.

\subsection{Proof of Theorem \ref{thr3}}\label{APD}

The PDF of the nearest  distance $d_{i^*}$ for the homogeneous PPP $\Phi$ with intensity $\lambda$, conditioned on $N\geq 1$, is given by \cite[Eq. (33)]{DIN2}
\begin{align}\label{pdf1}
f_r(r)=\frac{2 \lambda \pi}{1-\exp\big(-\lambda  \pi\rho^2 \big)} r \exp\big(-\lambda \pi r^2 \big).
\end{align}

The probability to successfully decode the source message at the selected relay, conditioned on $N\geq 1$, can be expressed as follows
\begin{align}
&\mathbb{P} \left\{{\sf SNR}_i\geq \epsilon \big|N\geq 1 \right\}=\mathbb{P} \left\{P\frac{|h_{i^*}|^2}{(1+d_{i^*}^\alpha)\sigma^2} \geq \epsilon \right\} \nonumber \\
&=\mathbb{P}\left\{|h_{i^*}|^2\geq \Xi(1+d_{i^*}^\alpha) \right\} \nonumber \\
&=\mathbb{E}\left\{\exp\left(-\Xi(1+d_{i^*}^\alpha)  \right) \right\} \nonumber \\
&=\int_{0}^{\rho}\exp\left(-\Xi(1+r^\alpha)  \right)f_r(r)dr \nonumber \\
&=\frac{2\pi\lambda \exp\left(-\Xi \right)}{1-\exp\big(-\pi \lambda \rho^2 \big)} \int_{0}^{\rho} \exp\bigg( -\Xi r^\alpha-\lambda \pi r^2  \bigg) r dr. \label{v1}
\end{align}  

The probability to successfully decode the relaying signal at the destination, conditioned on $N\geq 1$, is expressed as 
\begin{subequations}
\begin{align}
&\mathbb{P} \left\{{\sf SNR}_D\geq \epsilon \big| N\geq 1 \right\}=\mathbb{P} \left\{|g_{i^*}|^2 \geq \frac{\Xi (1+c_{i^*}^\alpha)}{\Psi} \right\} \nonumber \\
&=\mathbb{E}\left\{\exp\left(-\frac{\Xi (1+c_{i^*}^\alpha)}{\Psi} \right) \right\} \label{v2v2} \\
&=\frac{\lambda \exp\left(-\Xi/\Psi \right)}{1-\exp\big(-\lambda \pi \rho^2 \big)} \nonumber \\
&\times\!\! \int_{0}^{2\pi}\!\!\!\!\int_{0}^{\rho}\!\!\!\exp\!\!\left(-\frac{\Xi}{\Psi} (r^2+d_0^2-2r d_0\cos(\theta))^{\frac{1}{\delta}}-\lambda\pi r^2 \right) r dr d\theta. \label{v2}
\end{align}
\end{subequations}

The outage probability for the RCS scheme can be expressed by the general expression in \eqref{outage1}. By combining  \eqref{eta1}, \eqref{out3}, \eqref{v1}, \eqref{v2} and substituting  into \eqref{outage1}, we prove the statement in Theorem \ref{thr3}.

\subsection{Proof of Remark \ref{cor3}}\label{APD2}

For the special case that $\alpha=2$, \eqref{v1} is simplified to 
\begin{subequations}
\label{en12}
\begin{align}
&\mathbb{P} \left\{{\sf SNR}_i\geq \epsilon \big|N\geq 1 \right\} \nonumber \\
&=\frac{2\pi\lambda \exp\left(-\Xi \right)}{1-\exp\big(-\pi \lambda \rho^2 \big)}\int_{0}^{\rho} r\exp\left(-(\Xi+\lambda\pi)r^2 \right)dr \nonumber \\
&=\frac{\lambda \pi \exp\left(-\Xi \right)}{1-\exp\big(-\pi \lambda \rho^2 \big)}\cdot \frac{1-\exp\big(-(\Xi+\lambda \pi)\rho^2 \big)}{(\Xi+\lambda \pi)} \label{en1} \\
&\approx \frac{\lambda\pi(1-\Xi)}{\lambda \pi+\Xi} \left(1+\Xi\rho^2 \frac{\exp(-\lambda \pi \rho^2)}{1-\exp(-\lambda \pi \rho^2)} \right)\;\;\;\;\;\text{(for $\Xi\rightarrow 0$)}, \label{en2}
\end{align}
\end{subequations}
where \eqref{en1} holds from \cite[Eq. 3.381.8]{GRAD}, the asymptotic expression in \eqref{en2} uses the approximation $1-\exp(-x)\approx x$ for $x\rightarrow 0$. 

As for the relaying link, for the special case that $\rho<<d_0$, the Euclidean distance of the link relay-destination becomes $c_{i^*}\approx d_0$ and the expression in \eqref{v2v2} is simplified to 
\begin{align}
\mathbb{P} \left\{{\sf SNR}_D\geq \epsilon \big| N\geq 1 \right\}&=\exp\left(-\frac{\Xi (1+d_0^\alpha)}{\Psi}\right) \nonumber \\
&\approx 1-\frac{\Xi(1+d_0^\alpha)}{\Psi}\;\;\;\;\text{(for $\Xi\rightarrow 0$)}. \label{en3}
\end{align}
By combining  \eqref{en2}, \eqref{en3} and substituting into \eqref{outage1} with $\mathbb{P}\{N=0\}=0$, we prove Remark \ref{cor3}.

\subsection{Proof of Theorem \ref{Thr5}}\label{APE}

In the DB scheme, the relay nodes which are active (charged batteries) and are able to decode the source message, participate in the relaying transmission. The participating relays ($i \in \mathcal{C}$) form a PPP $\Phi'$, which yields from the original PPP $\Phi$ by applying an independent thinning operation. More specifically, the PPP $\Phi'$ has a density 
\begin{align}
\lambda'=\lambda \eta_1^{\text{DB}}(1-F_u(\Xi)).\label{ll}
\end{align}

In order to simplify the analysis, we focus on a disk $\mathcal{D}$ with $\rho<<d_0$; in this case, the distance between each relay and the destination becomes equal to $d_0$ i.e., $c_i\approx d_0$. The SNR expression in \eqref{memoides} is simplified to 
\begin{align}
{\sf SNR}_D=\frac{P_r}{(1+d_0^\alpha)\sigma^2}\sum_{i\in \mathcal{C}}|g_i|^2.
\end{align} 
The random variable $Y=\sum_{i\in \mathcal{C}}|g_i|^2$ is the sum of $|\mathcal{C}|=K$ independent and identically distributed (i.i.d.) exponential random variables, where the cardinality $K$ follows a Poisson distribution with density $\lambda'$. Therefore, the outage probability of the system is given by: 
\begin{align}
\Pi_{\text{DB}}&=\mathbb{P}\{{\sf SNR}_D<\epsilon \} \nonumber \\
&=\mathbb{P}\bigg\{Y< \underbrace{\frac{\Xi(1+d_0^\alpha)}{\Psi}}_{\triangleq  z_0} \bigg\} \nonumber \\
&=\sum_{k=0}^{\infty}\mathbb{P}\{Y< z_0|K=k\}\mathbb{P}\{ K=k\} \nonumber \\
&=\sum_{k=0}^{\infty} \frac{\gamma(k,z_0)}{\Gamma(k)} \exp(-\lambda' \pi \rho^2)  \frac{(\lambda' \pi \rho^2)^k}{k!}, \label{out5}
\end{align}
where $F_Y(x,k)=\frac{\gamma(k,x)}{\Gamma(k)}$ denotes the CDF of the random variable $Y$ i.e., Gamma distribution with shape parameter $k$. By plugging \eqref{ll} into \eqref{out5}, we prove the expression in Theorem \ref{Thr5}.

\subsection{Proof of Remark \ref{cor4}}\label{APF}

In the case that  $z_0\rightarrow 0$ (i.e., high SNRs with $P\rightarrow \infty$), the expression in \eqref{out5} is simplified to 
\begin{subequations}
\begin{align}
\Pi_{\text{DB}}^{\infty}&\approx \exp(-\lambda' \pi \rho^2) \sum_{k=0}^{\infty} \frac{(z_0)^k (\lambda' \pi \rho^2)^k}{(k!)^2} \label{q1} \\
&=\exp(-\lambda' \pi \rho^2) I_0 \left(2\rho \sqrt{ z_0 \lambda' \pi}  \right) \label{q2} \\
&\rightarrow \exp(-\lambda \eta_1^{\text{DB}} \pi \rho^2), \label{q3}
\end{align}
\end{subequations}
where $I_0(\cdot)$ denotes the zeroth order Modified Bessel function of the first kind, \eqref{q1} uses the approximation $\gamma(a,x)\approx x^a/a$ for small $x$, \eqref{q2} is based on the series representation in  \cite[Eq. 8.447.1]{GRAD}, and \eqref{q3} uses the approximation $I_0(x)\approx 1$ for small $x$.

\subsection{Proof of Theorem \ref{Thr6}}\label{APH}

The outage probability of the RRS scheme follows the analysis of the single-cell case, presented in Appendix \ref{APC}. The main difference is the computation of the decoding probability at the relay node. This probability conditioned on $d_i$ (distance between AP and relay),  can be expressed as follows
\begin{align}
\mathbb{P}\{{\sf SINR}_i\geq \epsilon | N\geq 1, d_i \}&=\mathbb{P} \left\{\frac{\frac{P |h_i|^2}{d_i^\alpha}}{P I+\sigma^2}\geq \epsilon  \right\} \nonumber \\
&=\mathbb{P} \left \{|h_i|^2 \geq \Xi d_i^\alpha+ \epsilon d_i^\alpha I \right\} \nonumber \\
&=\exp\left(-\Xi d_i^\alpha \right)\exp\left(-\epsilon d_i^\alpha I \right) \nonumber \\
&= \exp\left(-\Xi d_i^\alpha\right) L\big(\epsilon d_i^\alpha \big),
\end{align}
where $L(\cdot)$ is the Laplace transform of the random variable $I=\sum_{j\in \Upsilon/\{\mathcal{D} \}} \frac{H_j}{r_j^\alpha}$. With expectation over $d_i$, we obtain 
\begin{align}
\mathbb{P}\{{\sf SNR}_i\geq \epsilon | N\geq 1\}\!&=\!\!\!\int_{0}^{2\pi}\!\!\! \int_{0}^{\rho}\!\!\!\exp\left(-\Xi x^\alpha \right) L\big(\epsilon x^\alpha \big) f_d(x)xdxd\theta \nonumber \\
&=\frac{2}{\rho^2}\int_{0}^{\rho}\!\!\exp\left(-\Xi x^\alpha \right) L\big(\epsilon x^\alpha \big) x dx, \label{teliko}
\end{align}
where $f_d(x)=1/\pi \rho^2$ denotes the PDF of each point in the (typical) disc. 

The Laplace transform of the interference $I$ can be calculated as follows 
\begin{align} \label{Laplace}
L(s)&=\mathbb{E}\exp(-s I) \nonumber \\
&=\mathbb{E}\left(\prod_{j\in \Upsilon/\{\mathcal{D} \}}\exp(-s H_j r_j^{-\alpha}) \right) \nonumber \\
&= \exp\left\{ - 2\pi \mu \int_{\rho}^{\infty}\mathbb{E}_H \bigg[1-\exp(-sH r^{-\alpha}) \bigg]r dr \right\}.
\end{align}
where \eqref{Laplace} uses the probability generating functional of a PPP \cite[Sec. 4.6]{HAN}. Conditioned on $H$, we have
\begin{subequations}

\begin{align}
&2\int_{\rho}^{\infty}\big(1-\exp(-sH r^{-\alpha}) \big)r dr \nonumber \\
&= \delta \int_{\rho^\alpha}^{\infty} \left(1-\exp\left(-\frac{sH}{y} \right) \right)y^{\delta-1} dy \label{x1} \\
&=\delta \int_{0}^{\frac{1}{\rho^\alpha}}\bigg(1-\exp(-sHx) \bigg) x^{-\delta-1} dx \label{x2}  \\
&=\left(\exp\left(-\frac{sH}{\rho^\alpha} \right)-1\right)\rho^2+sH\int_{0}^{\frac{1}{\rho^\alpha}}x^{-\delta}\exp(-sHx)dx \label{x3} \\
&=\left(\exp\left(-\frac{sH}{\rho^\alpha} \right)-1\right)\rho^2+(sH)^{\delta} \gamma\left(1-\delta, \frac{sH}{\rho^\alpha} \right), \label{x4} 
\end{align}
\end{subequations}
where \eqref{x1} follows from the substitution $y\leftarrow r^{\alpha}$, \eqref{x2} from the substitution $x\leftarrow y^{-1}$, \eqref{x3} from integration by parts, and \eqref{x4} is based on \cite[Eq. 3.381.8]{GRAD}. With the expectation over $H$, we have
\begin{align}
&\int_{0}^{\infty} \rho^2 \left(\exp\left(-\frac{sH}{\rho^\alpha} \right)-1 \right)\exp(-H)dH \nonumber \\
&+s^{\delta} \int_{0}^{\infty}H^{\delta}\gamma\left(1-\delta,\frac{sH}{\rho^\alpha}\exp(-H) \right)dH \nonumber \\
&=s^{\delta} \left(\frac{s}{\rho^\alpha}\right)^{1-\delta}\frac{_2F_1\left(1,2\;; 2-\delta\;; \frac{\frac{s}{\rho^\alpha}}{\frac{s}{\rho^\alpha}+1}\right)}{(1-\delta)\left(\frac{s}{\rho^\alpha}+1 \right)^2}- \rho^2\frac{s}{s+\rho^\alpha}, \label{x5}
\end{align}
where \eqref{x5} is based on the expression in\cite[Eq. 6.455.2]{GRAD}. By substituting \eqref{x5} into \eqref{Laplace}, we have
\begin{align}
L(s)\!\!=\!\exp\!\left(\!\!-\pi \mu\!\left[\frac{s}{\rho^{\alpha-2}}\frac{_2F_1\left(1,2\;; 2-\delta\;; \frac{\frac{s}{\rho^\alpha}}{\frac{s}{\rho^\alpha}+1}\right)}{(1-\delta)\left(\frac{s}{\rho^\alpha}+1 \right)^2}\!-\!\frac{\rho^2s}{s+\rho^\alpha}\right]  \!\right).\label{Laplace2}
\end{align}

By combining \eqref{eta1}, \eqref{out3}, \eqref{out2}, \eqref{teliko} and substituting back  into \eqref{outage1} with $d_0=\rho=1/4\sqrt{\mu}$ we prove Theorem \ref{Thr6}.

\begin{IEEEbiography}[{\includegraphics[width=1in,height=1.25in,clip,keepaspectratio]{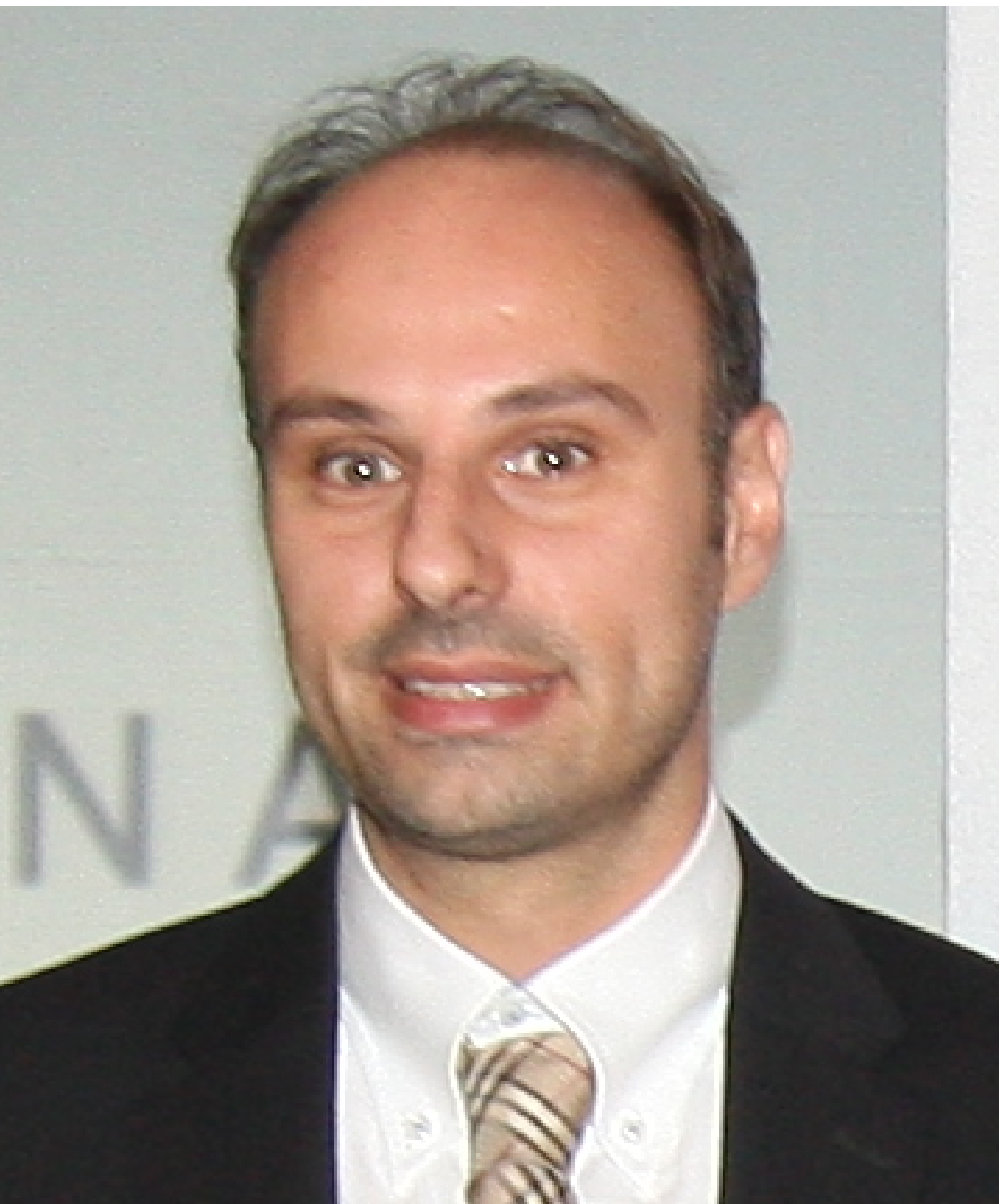}}]{Ioannis Krikidis} (S'03-M'07-SM'12) received the diploma in Computer Engineering from the Computer Engineering and Informatics Department (CEID) of the University of Patras, Greece, in 2000, and the M.Sc and Ph.D degrees from Ecole Nationale Sup\'erieure des T\'el\'ecommunications (ENST), Paris, France, in 2001 and 2005, respectively, all in electrical engineering. From 2006 to 2007 he worked, as a Post-Doctoral researcher, with ENST, Paris, France, and from 2007 to 2010 he was a Research Fellow in the School of Engineering and Electronics at the University of Edinburgh, Edinburgh, UK. He has held also research positions at the Department of Electrical Engineering, University of Notre Dame; the Department of Electrical and Computer Engineering, University of Maryland; the Interdisciplinary Centre for Security, Reliability and Trust, University of Luxembourg; and the Department of Electrical and Electronic Engineering, Niigata University, Japan. He is currently an Assistant Professor at the Department of Electrical and Computer Engineering, University of Cyprus, Nicosia, Cyprus. His current research interests include  communication theory, wireless communications, cooperative networks, cognitive radio and secrecy communications.

Dr. Krikidis serves as an Associate Editor for IEEE Transactions on Communications,  IEEE Transactions on Vehicular Technology, IEEE Wireless Communications Letters, and Wiley Transactions on Emerging Telecommunications Technologies.  He was the Technical Program Co-Chair for the IEEE International Symposium on Signal Processing and Information Technology 2013. He received an IEEE Communications Letters and IEEE Wireless Communications Letters exemplary reviewer certificate in 2012. He was the recipient of the {\it Research Award Young Researcher} from the Research Promotion Foundation, Cyprus, in 2013.
\end{IEEEbiography}

\end{document}